\documentclass[10pt,pra,aps,twocolumn,amsmath,amssymb,footinbib,showpacs,longbibliography,superscriptaddress,linenumbers]{revtex4-1}

\usepackage{times}
\usepackage[english]{babel}
\usepackage{siunitx}
\usepackage[version=4]{mhchem}
\usepackage{latexsym}
\usepackage{graphics}
\usepackage{subfigure}
\usepackage{epsfig}
\usepackage{color}
\usepackage{soul}
\usepackage{hyperref}
\usepackage{braket}
\usepackage[T1]{fontenc}
\usepackage[utf8]{inputenc}
\usepackage{amstext}
\usepackage{amssymb}
\usepackage{graphicx}
\usepackage{esint}
\usepackage{braket}
\usepackage{babel}
\usepackage{amsmath}
\DeclareSIUnit{\gauss}{G}
\usepackage{mathrsfs}
\usepackage[normalem]{ulem}
\usepackage{amsthm}  % 定理环境支持

\newtheorem{lemma}{Lemma}[section] 

\hypersetup{
colorlinks=true,
citecolor=blue,
linkcolor=red,
urlcolor=black
}


\begin{document}
\nolinenumbers
\title{
Observation of a tripartite quantum phase for coexisting extended, localized, and critical states
}
\author{Zhongshu Hu}
% \thanks{These authors contributed equally to this work.}
\affiliation{International Center for Quantum Materials, School of Physics, Peking University, Beijing 100871, China}
\affiliation{Hefei National Laboratory, Hefei 230088, China}

\author{Yajing Guo}
\affiliation{International Center for Quantum Materials, School of Physics, Peking University, Beijing 100871, China}
\affiliation{Hefei National Laboratory, Hefei 230088, China}

\author{Yu-Dong Wei}
\affiliation{International Center for Quantum Materials, School of Physics, Peking University, Beijing 100871, China}
\affiliation{Hefei National Laboratory, Hefei 230088, China}

\author{Bing-Chen Yao}
\affiliation{International Center for Quantum Materials, School of Physics, Peking University, Beijing 100871, China}
\affiliation{Hefei National Laboratory, Hefei 230088, China}

\author{Zhentian Qian}
\affiliation{Cavendish Laboratory, University of Cambridge, JJ Thomson Avenue, Cambridge CB3 0HE, United Kingdom}

\author{Xin-Chi Zhou}
\affiliation{Max Planck Institute for the Physics of Complex Systems, N\"othnitzer Str. 38, 01187 Dresden, Germany}

\author{Bao-Zong Wang}
\affiliation{International Center for Quantum Materials, School of Physics, Peking University, Beijing 100871, China}
\affiliation{Hefei National Laboratory, Hefei 230088, China}

\author{Jianing Yang}
\affiliation{International Center for Quantum Materials, School of Physics, Peking University, Beijing 100871, China}
\affiliation{Hefei National Laboratory, Hefei 230088, China}

\author{Xuzong Chen}
\affiliation{School of Electronics, Peking University, Beijing 100871, China}
\affiliation{Hefei National Laboratory, Hefei 230088, China}

\author{Shengjie Jin}
\email{jinshengjie@pku.edu.cn}
\affiliation{International Center for Quantum Materials, School of Physics, Peking University, Beijing 100871, China}
\affiliation{Hefei National Laboratory, Hefei 230088, China}

\author{Xiong-Jun Liu}
\email{xiongjunliu@pku.edu.cn}
\affiliation{International Center for Quantum Materials, School of Physics, Peking University, Beijing 100871, China}
%\affiliation{International Quantum Academy, Shenzhen 518048, China}
\affiliation{Hefei National Laboratory, Hefei 230088, China}

%\date{\today}

\begin{abstract}
The disordered quantum world hosts three fundamental types of states: extended, localized, and critical, of which the critical states are confined to fine-tuned critical points or mobility edges in randomly disordered systems. The tripartite phase, with all three types of states coexisting over finite spectral windows, represents a hallmark distinction between quasiperiodic and truly random systems in the localization physics. %are separated by anomalous mobility edges.
Here, we report the realization of this exotic phase in a quasi-periodically driven orbital optical lattice with ultracold atoms. The optical lattice with a quasiperiodic Floquet modulation coupling $s$ and $p$ orbitals is realized in experiment and shown to host the tripartite phase from exact theory. We develop a two-stage protocol to precisely prepare and detect the three types of quantum states. The characteristic exponents of these states are determined from expansion dynamics, showing their distinct universal transport properties. Our study marks a significant advancement in exploring unconventional critical phenomena and localization physics with ultracold atoms.
\end{abstract}

\maketitle

% \linenumbers

\section{Introduction}

The quantum world offers a rich tapestry of states beyond the familiar dichotomy of metals and insulators. At the heart of this complexity lies Anderson localization—a phenomenon where disorder can halt the extension of quantum wave-functions of particles, transforming a conductor into an insulator~\cite{Anderson1958,Abrahams1979PRL,PatrickALee1985RMP,Kramer1993RPP,Evers2008RMP}. Anderson localization stands as a fundamental concept in condensed matter physics, inspiring active studies for over one-half century~\cite{Abrahams2010}, is crucial for various prominent phenomena such as quantum Hall effect~\cite{Laughlin1981,Halperin1982}, and is deeply connected to the topological classification theory~\cite{PhysRevB.55.1142,Ryu_2010}. %with its significance rivaling superconductivity in manifesting the quantum coherence at macroscopic scale, and has been actively studied for one-half century~\cite{Abrahams2010}.
For the past two decades, ultracold atoms have emerged as a versatile quantum platform %for exploring these phenomena,
enabling controlled studies of Anderson transitions~\cite{Roati2008Nature,Billy2008Nature}, mobility edges~\cite{Semeghini2015NP,Luschen2018PRL}, and more recently, many-body localization where the interplay between interactions and disorder prevents thermalization—a fundamentally new phase of matter~\cite{Schreiber2015,Choi2016Science,Abanin2019RMP}.

However, nature is not limited to the binary outcome of extended versus localized states. Between the two lies a third, more enigmatic class: critical states, which in randomly disordered systems emerges at fine-tuned critical points or mobility edges~\cite{Evers2008RMP}. These states are neither confined to a region nor spread uniformly across the entire system~\cite{LiuNatPhys2024,huang2026experimental}. Instead, they exhibit a unique, scale-invariant geometry known as multifractality~\cite{PhysRevLett.97.046803,PhysRevLett.112.234101}, residing on a complex, self-similar set that parallels the fractal patterns observed across diverse natural phenomena, from snowflake formation~\cite{Libbrecht2005RPP} to galaxy distributions~\cite{coleman1992fractal}. This multifractal character, a hallmark of systems poised at a critical point, makes critical states exquisitely sensitive to their environment, a feature recently recognized as a powerful resource for quantum sensing and %criticality-enhanced
metrology~\cite{Sahoo2024PRA,Candia2023npj,Ilias2022PRXQ}. Such scale-invariant behavior also plays nontrivial roles in ground state properties associated with exotic symmetry breaking~\cite{Feigelman2007,Burmistrov2012,Zhao2019Disorder,Sacepe2020,Goncalves2024} and emergence of the profound non-ergodic many-body critical phase~\cite{Wang2020PRL,Wang2021PRL} that defies eigenstate thermalization hypothesis~\cite{DAlessio2016,Rigol2008,Srednicki1994}.

The coexistence of extended, localized, and critical states within a single quantum phase defines an exotic regime that is fundamentally inaccessible in randomly disordered systems. A striking example is the tripartite phase in quasiperiodic lattices~\cite{Wang2022PRB,Zhou2026,Goncalves2023Critical}, where all three types of states coexist over finite spectral windows, separated by anomalous mobility edges~\cite{Goncalves2023Critical,TongLiu2022Anomalous,Zhou2023PRL}. Remarkably, this phase represents a hallmark distinction between the two canonical classes of disordered systems: quasiperiodic and random, and is beyond the characterization of standard $\sigma$-model~\cite{Evers2008RMP}. It also provides a fertile ground for exploring a wealth of new physics, including unconventional transport dynamics~\cite{Landi2022RMP,Wang2020PRL,Wang2022PRB}, new critical transitions~\cite{Rispoli2019,Goblot2020NP,Li2023npj,shimasaki2024anomalous}, and novel thermal–nonthermal crossovers in interacting systems~\cite{PhysRevB.82.174411,PhysRevB.97.174206,Abanin2019RMP}. Recent theoretical advances have established a comprehensive phase diagram of localization physics in quasiperiodic systems, revealing the seven fundamental localization phases and predicting the conditions under which the elusive tripartite phase may emerge~\cite{Zhou2026}. Nevertheless, thus far this exotic phase remains hidden from experimental view. The challenge lies in not only its realization but also the precise preparation and detection of the high-energy eigenstates that define the phase, a task that is typically difficult for many-particle quantum systems. %such as solids and ultracold atoms. %in the solid-state systems and ultracold atom platforms.

In this article, we report the first experimental observation of the novel tripartite quantum phase %for coexisting extended, localized, and critical states
in a quasiperiodic orbital optical lattice. %with incommensurate Floquet modulation.
The realized model is not analytically solvable, whereas we show rigorously with renormalization group method that it hosts critical states along with extended and localized states. We develop a novel two-stage protocol to selectively prepare ultracold bosons in different types of states separated by anomalous mobility edges, with the nature of these states revealed by measuring their characteristic lengths in both real and dual momentum spaces. The direct measurement of the universal transport dynamics further identify the three classes of states and show their fundamental features.
This work successfully achieves a tripartite quantum phase for ultracold atoms and provides a benchmark to explore exotic quasiperiodic quantum matter.
% In this work, we report the first observation of a tripartite quantum phase in which extended, localized, and critical states coexist, realized in an experiment with ultracold atoms. This is achieved by implementing an orbital quasiperodic lattice, which couples  the two lowest Bloch bands(s and p orbitals) via a secondary lattice modulation. To access the different regimes of the spectrum, we develop a two-stage preparation protocol that enables selective loading atoms into three types of states. The nature of the resulting states is revealed by combining the real-space localization length with the momentum-space density–density correlation length, which together allow us to construct a comprehensive final-state phase diagram. Moreover, expansion dynamics directly distinguish the three phases, yielding characteristic dynamical indices in agreement with theoretical expectations and providing a benchmark for future studies of quasiperiodic quantum matter.

\begin{figure*}
    \centering
    \includegraphics[width=1\textwidth]{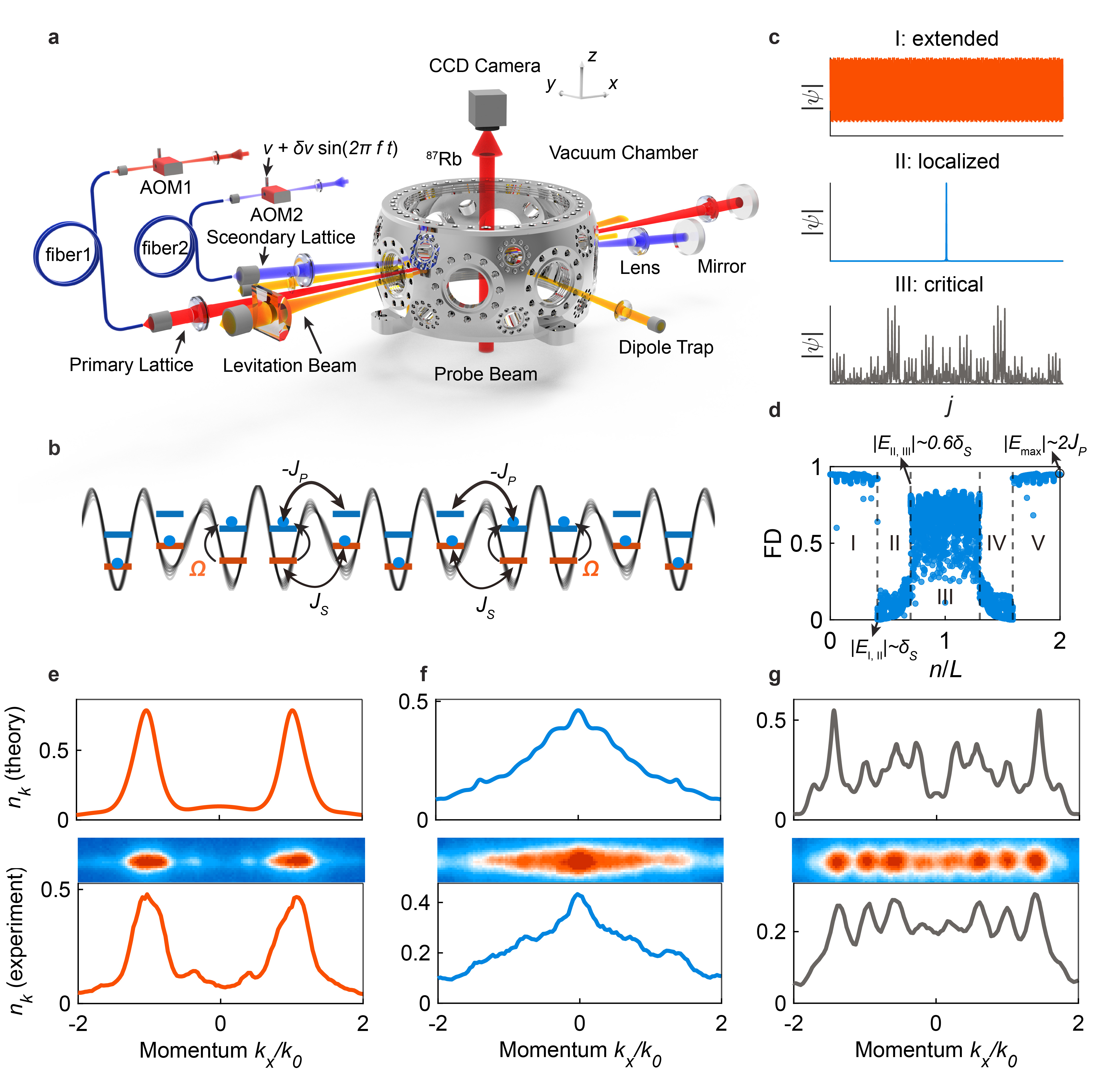}
    \caption{\textbf{Schematic of the incommensurately driven $s$-$p$ orbital lattice and characteristic states.}
    \textbf{a}, Experimental configuration. A Bose-Einstein condensate is loaded into a one-dimensional quasiperiodic optical lattice (along the $x$ direction) generated by superimposing a primary ($\lambda_0 = 1064$ nm) and a secondary ($\lambda_1 = 760$ nm) lattice. Absorption imaging is performed along the $z$ direction. The relative displacement of the lattices is controlled by modulating the driving frequency of the acoustic-optic modulator (AOM) for the secondary lattice. A levitation beam, shaped as a sheet along the $x$ direction, provides a strong vertical gradient force to counteract gravity along the $z$ direction, while maintaining low trap frequencies in the $xy$ plane.
    \textbf{b}, Lattice structure schematic. A near-resonant Floquet modulation of the secondary lattice hybridizes the $s$ and $p$ orbitals, inducing an incommensurate inter-orbital coupling $\Omega$. $J_s$ and $J_p$ denote the respective intra-orbital tunneling amplitudes.
    \textbf{c}, Illustrative real-space distributions of the three types of localization states, obtained from theory.
    \textbf{d}, Fractal dimension (FD) of the effective Hamiltonian eigenstates versus the eigenenergy $E/J_P$ for a system size $L=10946$. Sharp transitions in FD demarcate five regions, where regions I, II, and III represent the extended, localized, and critical zones in the tripartite phase. The MEs are characterized by the conventional ME at $|E_{I,II}| \sim \delta_s$, while the anomalous ME separating localized and critical states emerges at $|E_{II,III}| \sim 0.6 \delta_s$.
    \textbf{e}–\textbf{g}, Representative momentum distributions. Top: Theoretical $n_k$ averaged over a finite energy window. Middle: Experimental TOF images ($t_{\mathrm{TOF}} = 45$ ms). Bottom: Integrated 1D momentum profiles, showing the distinct fingerprints of extended (sharp peaks), localized (broad), and critical (fragmented multi-peak) states.}
    \label{fig:1}
\end{figure*}

\section{$S$-$P$ Orbital Quasiperiodic Lattice}

The model we propose and realize is termed $s$-$p$ orbital quasiperiodic lattice---a one dimensional system where $s$ and $p$ orbitals are coupled through a quasiperiodic Floquet modulation (Methods)~\cite{Supplementary}. %quasiperiodic orbital lattice, which applies a quasiperiodic Floquet modulation to couple the $s$ and $p$ orbitals, termed the $s$-$p$ orbital quasiperiodic lattice. %We start with a one dimensional lattice model that hosts a tripartite quantum phase in which extended, localized, and critical states coexist, which we term the $s$-$p$ orbital quasiperiodic lattice, where the s and p orbitals are the two lowest orbitals that are coupled through an incommensurate lattice modulation in an optical lattice.
The model system is realized by imposing two nearly collinear optical lattices of different wavelengths, with one being deep primary and another being weak secondary, as shown in Fig.~\ref{fig:1}a. The $s$ and $p$ orbitals correspond to the lowest two energy levels of the primary lattice at each site~\cite{Li2016RPP}. Our key ingredient is that the secondary lattice is spatially incommensurate and shaken with respect to the primary one at a frequency matching the $s$-$p$ orbital energy gap. The incommensurate shaking marks a major difference of the present scheme from the previously widely studied spatially periodic shaken lattices in the literature~\cite{Jotzu2014,flaschner2016experimental,schweizer2019floquet}. %Resonant shaking, by directly coupling the two lowest orbitals, provides an especially simple and controllable scheme, which has already been demonstrated in weakly interacting systems~\cite{Parker2013,Songbo2022,Sun_2023}.
In the deep lattice regime, we write down the tight-binding Hamiltonian by considering only the onsite and nearest neighboring couplings, which takes the form in the rotating frame as $ \hat{H} = \hat{H}_0 + \hat{H}_{\mathrm{qp}}$. Here $\hat{H}_0$ describes the bare $s$-$p$ orbital lattice and $\hat{H}_{\mathrm{qp}}$ captures the quasiperiodic modulation \cite{Supplementary}, given by
\begin{eqnarray}
\begin{split}
\hat{H}_0 &= \sum_j \Big( -J_s \hat{a}_j^\dagger \hat{a}_{j+1} + J_p \hat{b}_j^\dagger \hat{b}_{j+1} +\text{H.c.} \Big) +\Delta \hat{b}_j^\dagger \hat{b}_j, \\
\hat{H}_{\mathrm{qp}} &= \sum_j  \mathrm{i}\Omega_j (\hat{a}_j^\dagger \hat{b}_j+\hat{b}_j^\dagger \hat{a}_j) + \big( \delta_{s,j}\hat{a}_j^\dagger \hat{a}_j + \delta_{p,j}\hat{b}_j^\dagger \hat{b}_j \big),
\end{split}
\label{eq:H}
\end{eqnarray}
where $\hat a_j$ ($\hat b_j$) is the bosonic annihilation operator of $s$ ($p$) orbital, and $J_p \gg J_s$ are the nearest-neighbor hopping coefficients. The quasiperiodic modulation leads to both inter-orbital onsite quasiperiodic coupling $\Omega_j=\Omega \sin(2\pi \beta j)$ and intra-orbital onsite quasiperiodic potentials $\delta_{\nu,j}=\delta_{\nu} \cos(2\pi \beta j)$, with $\nu=s,p$ for the $s$ and $p$ orbitals, respectively (Fig.~\ref{fig:1}b). The effective detuning $\Delta=\Delta_E-hf$ is determined by the average gap $\Delta_E$ between the two orbitals and the shaking frequency $f$, with $h$ being the Planck constant.
%Additionally, we have performed simulations with Gross-Pitaevskii equations showing that higher bands are off-resonant and unimportant for the used experimental sweeps.

The Hamiltonian in Eq.~(\ref{eq:H}) is not exactly solvable due to the presence of quasiperiodic couplings, but the realization of the tripartite phase can be rigorously proved. Before giving a quantitative proof, we elaborate the intuition behind the emergence of this phase. For convenience we take the $s$-orbital to be frozen with $J_s\rightarrow0$, and other parameters $J_p\geq\delta_s\sim \delta_p>\Omega$. For eigenstates residing in the band edges, with eigen-energies $|E|>\delta_s$ beyond $s$ orbital onsite energy, they are mainly contributed by the $p$ orbital, and are extended since the hopping coupling $J_p$ dominates over quasiperiodic terms. On the other hand, for eigen-energy within $s$ orbital onsite energy amplitude $|E|\leq\delta_s$, the inter-orbital coupling in Eq.~(\ref{eq:H}) can enable full transition from $p$ to $s$ orbitals at the $\tilde j$-sites with $\delta_{s,\tilde j}=E$. These sites are incommensurately distributed over the system. The full transition to $s$-orbital impedes tunneling of the $p$ orbitals across those sites, manifesting incommensurately distributed zeros (IDZs) in hopping, a central mechanism leading to the critical states and captured by the profound Avila's global theory~\cite{Zhou2023PRL}. With the IDZs, there are two scenarios. For $|E|\lesssim\delta_s$, the states reside in the vicinity of maximum/minimum of the onsite potential $\delta_{s,j}$, with relatively small inter-orbital coupling $\Omega_j$. These states are strongly affected by the qusiperiodic potential and are localized. For $E\sim0$, the states are mainly located at the nodes of the qusiperiodic potential $\delta_{s,j}$, with relatively large $\Omega_j$. These states are weakly affected by quasiperiodic potential, hence being delocalized and critical. More rigorously, we show with renormalization group method that the critical state at zero energy $E\rightarrow0$ emerges when ($\Omega\neq0$)
\begin{equation}
\Omega^2 + \delta_s\delta_p \le 2\delta_s J_p,
\end{equation}
marking the regime in which critical states persist~\cite{Supplementary}.

The three types of states can be numerically identified by the fractal dimension (FD). The FD approaches $1$ ($0$) for extended (localized) states, while $0<\mathrm{FD}<1$ characterizes critical states~\cite{Zhou2023PRL}. Taking the parameters $J_p=\delta_s=\delta_p=2\Omega=1$, $J_s=0.1$, and $\Delta=0$, we show in Fig.~\ref{fig:1}d the FD of each eigenstate versus the eigenenergy $E/J_p$. Five distinct regions with MEs characterized by sharp transitions in FD separating extended, localized, and critical zones, corresponding to regions I, II, and III in the figure, consistent with the above analysis of the tripartite phase.

These states can be detected based on their distinct localization properties~\cite{Wang2022PRB}. In real space, localized states are confined near a single lattice site, extended states spread across the entire system, and critical states display self-similar fractal structures [Fig.~\ref{fig:1}c]. In momentum space, localized states give rise to broad distributions [Fig.~\ref{fig:1}f, upper row], whereas extended states yield sharp peaks [Fig.~\ref{fig:1}e, upper]. Critical states show intermediate features: their eigenfunctions are multifractal and delocalized, leading to momentum distributions that are clearly distinct from both [Fig.~\ref{fig:1}g]. In experiments, finite temperature further broadens the measured distributions, which can be modeled as a thermal superposition of eigenstates within a finite energy window.

\section{Experimental Setup}
The experimental realization of the $s$-$p$ orbital quasiperiodic lattice is based on a Floquet engineering of ultracold $^{87}$Rb atoms in a bichromatic optical potential.
First, a Bose-Einstein condensate of $1\times 10^5$ \ce{^{87}Rb} atoms is produced by evaporative cooling in an optical dipole trap, without discernible thermal component. The atoms are polarized in the hyperfine state $\left|F=2, m_F=2\right\rangle$.
% The trapping frequencies are $\nu_x=\nu_y=\SI{20}{Hz}$ and $\nu_z=\SI{28}{Hz}$, and gravitational sag is compensated via magnetic levitation.
The condensate is then adiabatically loaded into a bichromatic optical lattice formed by superimposing a primary static lattice with wavelength $\lambda_0=\SI{1064}{nm}$ and a secondary shaken lattice with wavelength $\lambda_1=\SI{760}{nm}$. As illustrated in Fig.~\ref{fig:1}a, the primary lattice is aligned along $x$ direction, and the secondary lattice lies in $xz$ plane at an angle of $\theta=7^\circ$ relative to $x$ axis. The lattice depths are set to $10 E_{r,0}$ (primary lattice) and $1.6 E_{r,1}$ (secondary lattice), ensuring the tight-binding regime. Here the recoil energies $E_{r,0}=\hbar^2 k_0^2 / (2m)$ and $E_{r,1}=\hbar^2 k_1^2 / (2m)$, with $k_0=2\pi/\lambda_0$ and $k_1=2\pi/\lambda_1$. The spatial incommensurability is introduced by the secondary lattice with a wavelength ratio $\beta = \lambda_0/(\lambda_1 \cos \theta)\approx 1.411$, a value sufficiently irrational to emulate quasiperiodicity. The energy gap between the $s$ and $p$ orbitals is $\Delta_E=\SI{10.2}{kHz}$, with tunneling amplitudes $J_s=\SI{39}{Hz}$ and $J_p=\SI{505}{Hz}$.

Both lattices are ramped up simultaneously within \SI{80}{ms} following a standard exponential loading sequence, preparing the atoms in a localized state in the $s$ orbital. After holding for \SI{13}{ms}, periodic modulation of the secondary lattice position is initiated. This is achieved by sinusoidally modulating the driving frequency of the secondary lattice laser using an acousto-optic modulator in a double-pass configuration. Owing to the retro-reflected geometry, as shown in Fig.~\ref{fig:1}a, this frequency modulation translates into a spatial oscillation of the lattice at the atomic position. The $s$-$p$ orbital coupling is engineered by sinusoidally modulating the position of the secondary lattice at a frequency $f \approx \Delta_E/h$. This quasiperiodic driving translates the spatial incommensurability into a site-dependent inter-orbital coupling $\Omega_j$.The resulting coupling strength $\Omega$ is approximately proportional to the shaking displacement amplitude $A$, which is determined by the modulation depth and the distance $l_0\approx \SI{0.9}{m}$ between the atoms and the retro-reflector.

A central challenge in observing the tripartite phase is the selective precise preparation of cold atoms into target high energy eigenstates, which is inaccessible via standard adiabatic loading. We solve this challenge by developing a novel two-stage protocol that utilizes both frequency and amplitude ramps to prepare the system in target energy windows, with the observed distinct momentum distributions signifying the coexisting three types of states [lower panels of Fig.~\ref{fig:1}e-g]. This powerful protocol is applicable to exploring broad localization physics in Floquet quantum systems. %In the following, we provide a detailed exposition of the dynamic control sequences employed to achieve the preparation of these three distinct quantum states.

\begin{figure*}
    \centering
    \includegraphics[width=0.9\textwidth]{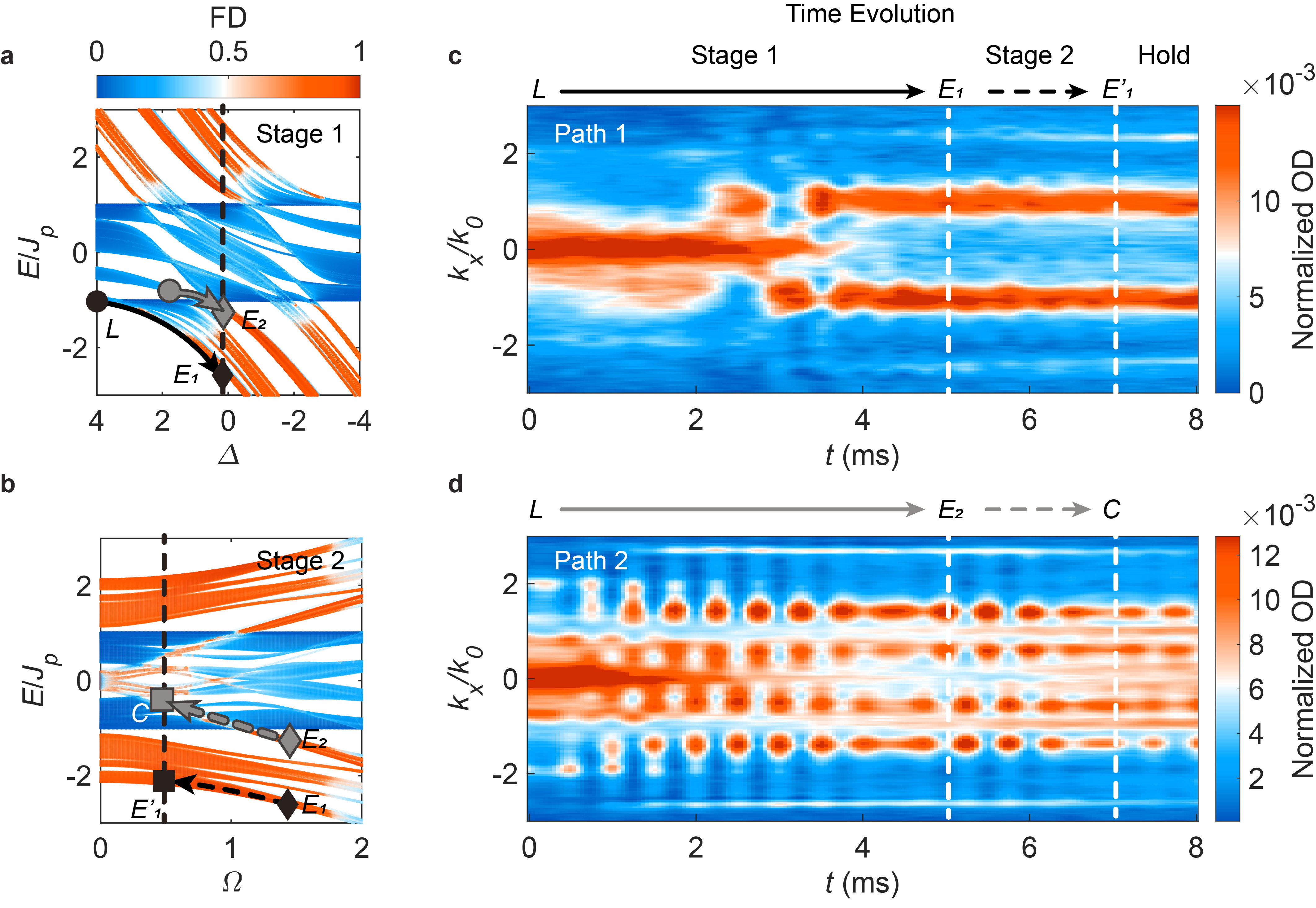}
    \caption{\textbf{Two-stage protocol for selective state preparation.}
    \textbf{a},\textbf{b}, Fractal dimensions (FDs) of all eigenstates as functions of the detuning $\Delta$ (\textbf{a}) and coupling strength $\Omega$ (\textbf{b}), serving as schematic representations of the evolution of quantum states. The color scale denotes the FD, where Path 1 (black) and Path 2 (gray) represent the distinct evolutionary trajectories. Stage 1 is executed in (\textbf{a}) via an adiabatic frequency sweep (solid lines) to bridge the $s$-$p$ orbital gap, while Stage 2 is performed in (\textbf{b}) through an amplitude ramp (dashed lines). In these processes, the initial localized state $L$ evolves into intermediate extended states $E_1$ and $E_2$, respectively. Subsequently, $E_1$ adiabatically maps onto the ground extended state $E_1'$, whereas $E_2$ serves as a stepping stone, diabatically evolving into the critical regime $C$ as $\Omega$ decreases. The vertical dashed lines indicate the parameters of the final Hamiltonian.
    \textbf{c},\textbf{d}, Time-resolved momentum evolution along Path 1 (\textbf{c}) and Path 2 (\textbf{d}). In Path 1, $L$ adiabatically transforms into the ground extended state $E_1$, evidenced by narrow peaks at $\pm k_0$. In Path 2, the system evolves from $L$ to the excited extended state $E_2$ (four-peak structure) and finally the critical state $C$, characterized by a fragmented distribution across the first two Brillouin zones. TOF images are recorded every \SI{0.25}{ms}, averaged over six repetitions, and smoothed between consecutive points. }
    \label{fig:2}
\end{figure*}

\section{Two-stage preparation protocol}
Our experimental objective is twofold: to engineer the Hamiltonian in the parameter regime that realizes the tripartite phase for coexisting extended, localized, and critical states, and to selectively populate these distinct eigenstates with cold bosons. We establish a two-stage protocol to achieve this important goal by engineering the two intrinsic controlling knobs of the Floquet system, the modulation frequency $f$ and amplitude $A$.
The first stage utilizes an adiabatic frequency ramp to %navigate the Floquet spectrum and
lock the system into a specific energy sector. The second stage is a rapid amplitude ramp which projects the system into the final parameter regime hosting tripartite phase ($A_f = \SI{21}{nm}$, $f_f = \SI{10.2}{kHz}$, giving $\Omega\sim 0.5$ and $\Delta\sim0$ in the tight-binding model). This protocol enables the realization of the tripartite phase with high-fidelity state control.

For the first stage, we employ an adiabatic frequency ramp ($f_i\rightarrow f_f$) to prepare the system into specific energy sectors. We start with the initial frequency $f_i$ and shaking amplitude $A_i$, and then the frequency is slowly ramped to $s$-$p$ resonant regime with $f_f = \SI{10.2}{kHz}$ ($\Delta=0$) over \SI{4}{ms} while keeping $A_i$ (Methods). We identify three distinct preparation pathways based on the coupling strength $\Omega$ and initial frequency $f_i$, which respectively drive the system into different target states in the fully controllable way. %This two-stage protocol ensures that the system is precisely prepared onto different target states, determined by the selection of initial parameters.
First, for small $A_i$ (weak coupling $|\Omega/J_p|\ll1$), the $s$-$p$ hybridization is negligible. Consequently, the system remains in the localized $s$-orbital state throughout the ramp, independent of $f_i$.
Second, for large $A_i$ (strong coupling, e.g., $\Omega/J_p \approx 1.2$), the evolution bifurcates based on initial frequency $f_i$, as illustrated in Fig.~\ref{fig:2}a. With a low $f_i $ (Path 1), the $s$-band intersects the $p$-band bottom during the ramp. This induces a Landau-Zener transition, transferring atoms into the low-energy extended state $E_1$. Conversely, with a high $f_i$ (Path 2), the $s$-band initially overlaps and hybridizes with the $p$-band in the middle. No band-edge crossing occurs, and the system evolves adiabatically into the high-energy extended state $E_2$.

% The second stage implements a diabatic state projection to reach the final Hamiltonian and obtain all three types of states. We rapidly reduce the shaking amplitude from different $A_i$ to the same $A_f = \SI{21}{nm}$ within \SI{2}{ms} while maintaining $f_f$.
% Figure~\ref{fig:2}(\textbf{B}) illustrates the eigenenergies and FDs of Hamiltonian~\ref{eq:H} as a function of $\Omega$, at the resonance $f_f = \SI{10.2}{kHz}$ ($\Delta = 0$). For small $A_i$ case, the system stays in a localized state which doesn't shown in the figure. For two large $A_i$ cases, the paths are shown as follows.
% For Path 1 (black arrow in Fig. \ref{fig:2}(\textbf{B})), the state $E_1$ remains stable and maps onto the ground extended state $E_1'$ of the final Hamiltonian. Crucially, for Path 2 (gray arrow in Fig. \ref{fig:2}(\textbf{B})), the rapid amplitude ramp acts as a diabatic drive, steering the stable extended state $E_2$ into the energy window hosting the critical states $C$ [Fig. \ref{fig:2}(\textbf{B})]. This strategy utilizes $E_2$ as a "stepping stone", allowing us to precisely target the tripartite phase's most exotic component.

% The second stage employs a diabatic projection to reach the final Hamiltonian while targeting specific eigenstates. We reduce the shaking amplitude from different initial values $A_i$ to a common final value $A_f = \SI{21}{nm}$ within \SI{2}{ms}, keeping the frequency at $f_f$.
For the second stage, we implement a diabatic state projection to reach the final Hamiltonian and obtain the targeted type of states by rapidly reducing the shaking amplitude as $A_i\rightarrow A_f = \SI{21}{nm}$ within \SI{2}{ms} while maintaining $f_f$ (Methods). As shown in the spectrum versus $\Omega$ [Fig. \ref{fig:2}b], the state $E_1$ (Path 1) remains stable and adiabatically evolves into the ground extended state $E_1'$ of the final Hamiltonian. Crucially, in Path 2, the amplitude ramp acts as a diabatic drive that steers the intermediate extended state $E_2$—acting as a "stepping stone"—into the energy window hosting the critical states $C$. For the small $A_i$ case, the system remains in the localized regime throughout the amplitude ramping.

We show the high feasibility of the two-stage protocol with the evolution of momentum distributions measured from time-of-flight (TOF) absorption imaging [Fig. \ref{fig:2}c,d]. In Path 1 ($f_i = \SI{8}{kHz}$), the broad distribution of the initial localized state $L$ evolves into two sharp peaks at $\pm k_0$ [Fig. \ref{fig:2}c], giving the ground extended state $E_1'$ [Fig. \ref{fig:1}e]. In Path 2 ($f_i = \SI{10}{kHz}$), the system first evolves into a characteristic four-peak structure at $\pm(2k_0 - k_1)$ and $\pm k_1$, which is the signature of the hybridized state $E_2$. During the second stage, this structure transforms into a fragmented, multi-peak distribution across the first two Brillouin zones [Fig. \ref{fig:2}d], being the features of a critical state $C$ [Fig. \ref{fig:1}g]. The oscillations observed between $\pm k$ arise from micromotion inherent to the Floquet modulation \cite{Eckardt2017Colloquium, Bukov2015high-frequency, Songbo2022,Supplementary}.
With the fully controlled initial parameters $(A_i, f_i)$, we can navigate the system to be precisely prepared in extended ($E_1'$), critical ($C$), or fully localized state. In particular, the metastable $E_2$ state provides a reliable pathway to access the critical states, whichis difficult to populate via standard adiabatic loading.

\begin{figure*}
    \centering
    \includegraphics[width=1\textwidth]{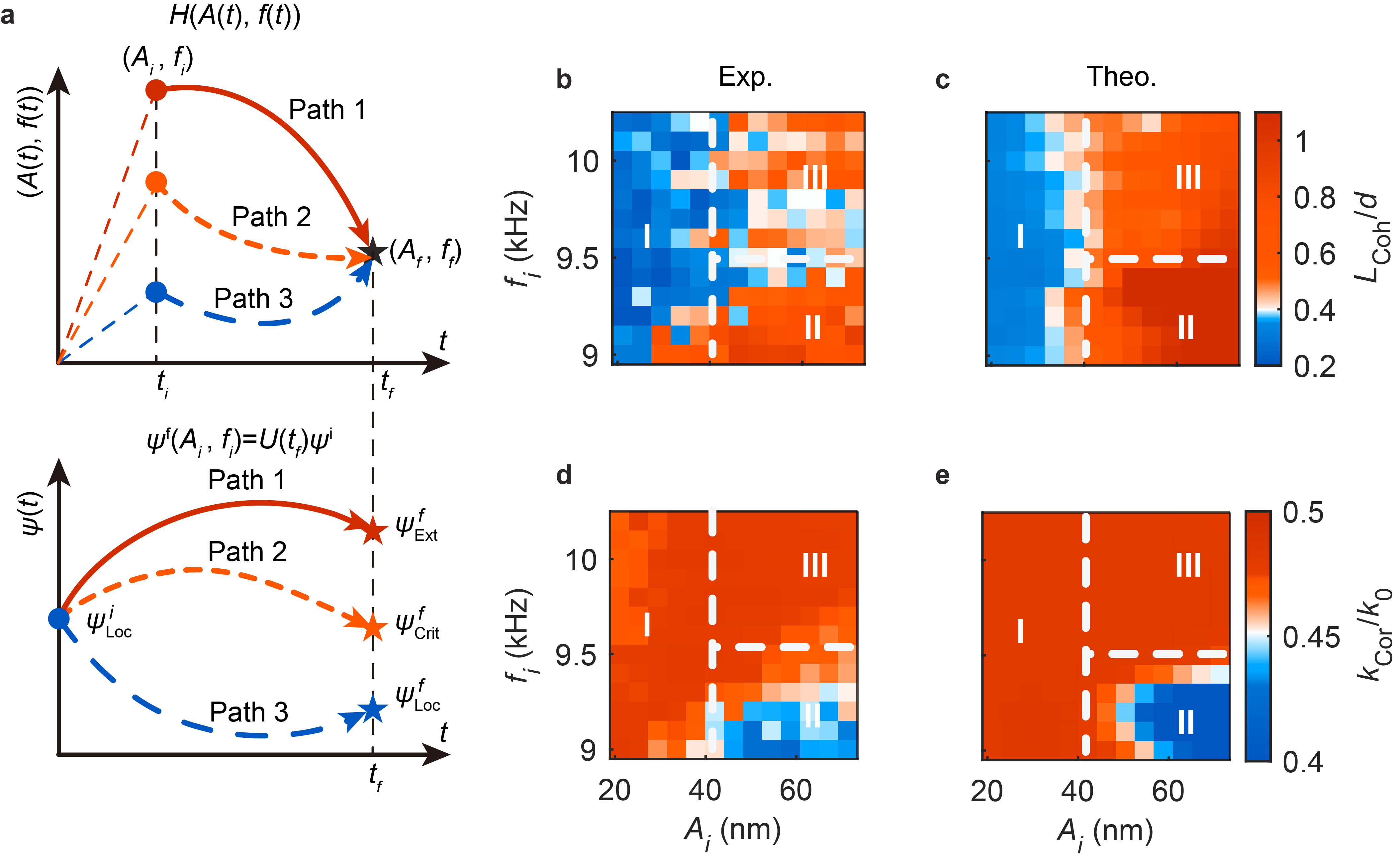}
    \caption{
\textbf{Quantitative identification of the tripartite phase via characteristic lengths.}
\textbf{a}, Schematic illustration of the pathway dependence of the final states. The upper panel shows the trajectory in the $(A, f)$ parameter space, and the lower panel depicts the corresponding state evolution $\psi(t)$. All paths terminate at the same final Hamiltonian $(A_f,f_f)$ (upper), with different choices of $(A_i, f_i)$ rendering different types of target high energy states (lower).
\textbf{b},\textbf{c}, Spatial coherence length $L_{\mathrm{Coh}}$ from experiment (\textbf{b}) and simulation (\textbf{c}) as a function of the initial shaking frequency $f_i$ and amplitude $A_i$. $L_{\mathrm{Coh}}$ quantifies spatial extent; its suppression in region I identifies the localized regime.
\textbf{d},\textbf{e}, Momentum-space density-density correlation length $k_{\mathrm{Cor}}$ from experiment (\textbf{d}) and theory (\textbf{e}) under the same driving conditions. $k_{\mathrm{Cor}}$ quantifies the spread of momentum distribution. A minimal $k_{\mathrm{Cor}}$ in region II marks the extended regime.
% Each experimental data point is averaged over four repeated TOF images.
% Analysis of real- and momentum-space lengths identifies three regions (I–III), corresponding to localized, extended, and critical states. In region I, the real-space localization length is significantly smaller, consistent with localized states. In region II, the momentum-space correlation length is reduced, indicating extended states. Region III is characterized by large values of both length, consistent with critical states exhibiting multifractal wavefunctions. Experimental data and theoretical simulations show good agreement across all parameters.
Region III is characterized by large values of both length, consistent with critical states exhibiting multifractal wavefunctions. Experimental data and theoretical simulations show good agreement across different parameters and each experimental data point is averaged over four repeated TOF images.
}
    \label{fig:3}
\end{figure*}

The two-stage protocol enables a systematic experimental framework for the precise and selective preparation of states in the specific energy region. For fixed final Hamiltonian, the finally prepared quantum state of the ultracold atoms depends on the initial parameters $(A_i, f_i)$ [Fig.~\ref{fig:3}a]. We can map the prepared quantum state as a function of $(A_i, f_i)$, rendering an effective {\em preparation phase diagram} shown in Fig.~\ref{fig:3}b-e, which also clearly implies the existence of anomalous mobility edges in the tripartite phase.
%To quantitatively characterize the three types of quantum states prepared via the two-stage protocol, we systematically map the final-state properties as a function of the initial parameters $(A_i, f_i)$. As illustrated in Fig.~\ref{fig:3}(\textbf{A}), although all experimental trajectories terminate at the same final Hamiltonian,
%In particular, the pathway dependence of the Floquet evolution allows us to steer the system into different energy sectors.
We elaborate below how to determine this effective phase diagram in experiment.

We obtain a comprehensive library of TOF momentum distributions by scanning $f_i$ and $A_i$. These distributions, each averaged over four experimental realizations, reveal three distinct regions I-III in the $(A_i, f_i)$ parameter space, corresponding to the different prepared states and determined by two complementary characteristic lengths extracted from TOF images. The first observable is the spatial coherence length
\begin{eqnarray}
L_{\mathrm{Coh}}=\frac{\int_0^{x_\mathrm{max}} \mathrm{d}x \, |S(x) x|}{\int_0^{x_\mathrm{max}} \mathrm{d}x \, |S(x)|},\ S(x)=\int d x^{\prime}G\left(x^{\prime}, x+x^{\prime}\right),
\end{eqnarray}
which characterizes the spatial correlation and the degree of localization of the atomic ensemble, with the first-order spatial correlation function $G\left(x^{\prime}, x+x^{\prime}\right)=\langle\hat{\Psi}^{\dagger}\left(x^{\prime}\right) \hat{\Psi}\left(x+x^{\prime}\right)\rangle$ related to momentum-space density distribution $n(k) =\int \mathrm{d}x \, e^{\mathrm{i}kx} S(x)$ from the Wiener-Khinchin theorem~\cite{Deissler2010NP}. The second one is the density-density correlation length in momentum-space $k_{\mathrm{Cor}}$, given from $k_{\mathrm{Cor}}=\int_0^{k_0} |C(k)k| \, \mathrm{d}k / \int_0^{k_0} |C(k)| \, \mathrm{d}k$, with the correlation function $C(k)=\int_{-2k_0}^{2k_0} n(k'+k)\, n(k') \, \mathrm{d}k'$ \cite{Fang2016MomentumCorr}.
%For the many-particle system, $n(k)$ represents the collective distribution of the entire atomic cloud, and $L_{\mathrm{Coh}}$ effectively captures the mean distance over which phase coherence persists across the ensemble.
In the single-particle regime, the length $L_{\mathrm{Coh}}$ (dual to $k_{\mathrm{Cor}}$) is finite for a localized state but diverges for extended and critical states. For the many-particle system with atoms occupying multiple states, the magnitude of $L_{\mathrm{Coh}}$ characterizes the mean distance over which phase coherence persists across the ensemble, and is significantly reduced. Nevertheless, we find that the characteristic lengths still exhibit sharp difference between localized and delocalized regimes, %For a localized system, the restricted spatial support of the wave functions leads to a rapid decay of $S(x)$, resulting in a small $L_{\mathrm{Coh}}$. In contrast, for extended or critical states, the long-range spatial correlations manifest as a significantly larger $L_{\mathrm{Coh}}$.
providing a robust observable to determine localization-to-delocalization transitions \cite{Supplementary}.

In the experiment, the integration is bounded by $x_\mathrm{max}=5d$ (where $d=\SI{532}{nm}$ is the primary lattice constant) due to the length of the harmonic confinement. %Theoretically, $L_{\mathrm{Coh}}$ is finite for a localized state but diverges for extended and critical states. In real experiment the measured $L_{\mathrm{Coh}}$ is finite for extended and critical states, while distinguishable from that of a localized state \cite{Supplementary}.
As shown in Fig.~\ref{fig:3}b, a sharp contrast emerges at $A_i \approx \SI{42}{nm}$: Region I exhibits the minimal $L_{\mathrm{Coh}}$ in comparison with remaining regions, manifesting the localized states.
Extended states are characterized by sharp Bragg peaks, resulting in the minimal $k_{\mathrm{Cor}}$, whereas localized and critical states exhibit broader distributions. %Our results [Fig.~\ref{fig:3}(\textbf{D,E})] show that
With this we identify the Region II as the extended one with the uniquely small $k_{\mathrm{Cor}}$ [Fig.~\ref{fig:3}d]. The remaining Region III displays relatively large values for both $L_{\mathrm{Coh}}$ and $k_{\mathrm{Cor}}$. This "dual-delocalization" is a hallmark of critical states with a multifractal structure that is delocalized in both real and momentum spaces \cite{Gonfmmode2023Renorm}. The experimental results across the parameter space agree well with the numerical simulations [Fig.~\ref{fig:3}c,e]. The observations in Regions I-III are highly consistent with the preparation pathways and state classifications discussed above, confirming the high-feasibility preparation of the tripartite phase. The emergence of the three regions in the effective preparation phase diagram also implies existence of anomalous mobility edges in the spectrum.
% These measurements thus provide direct access to characteristic length scales of the final states, offering a clear way to distinguish localized, extended, and critical regimes in a driven lattice, and establishing a route to probe mobility edges with ultracold atoms.

\begin{figure*}
    \centering
    \includegraphics[width=1\textwidth]{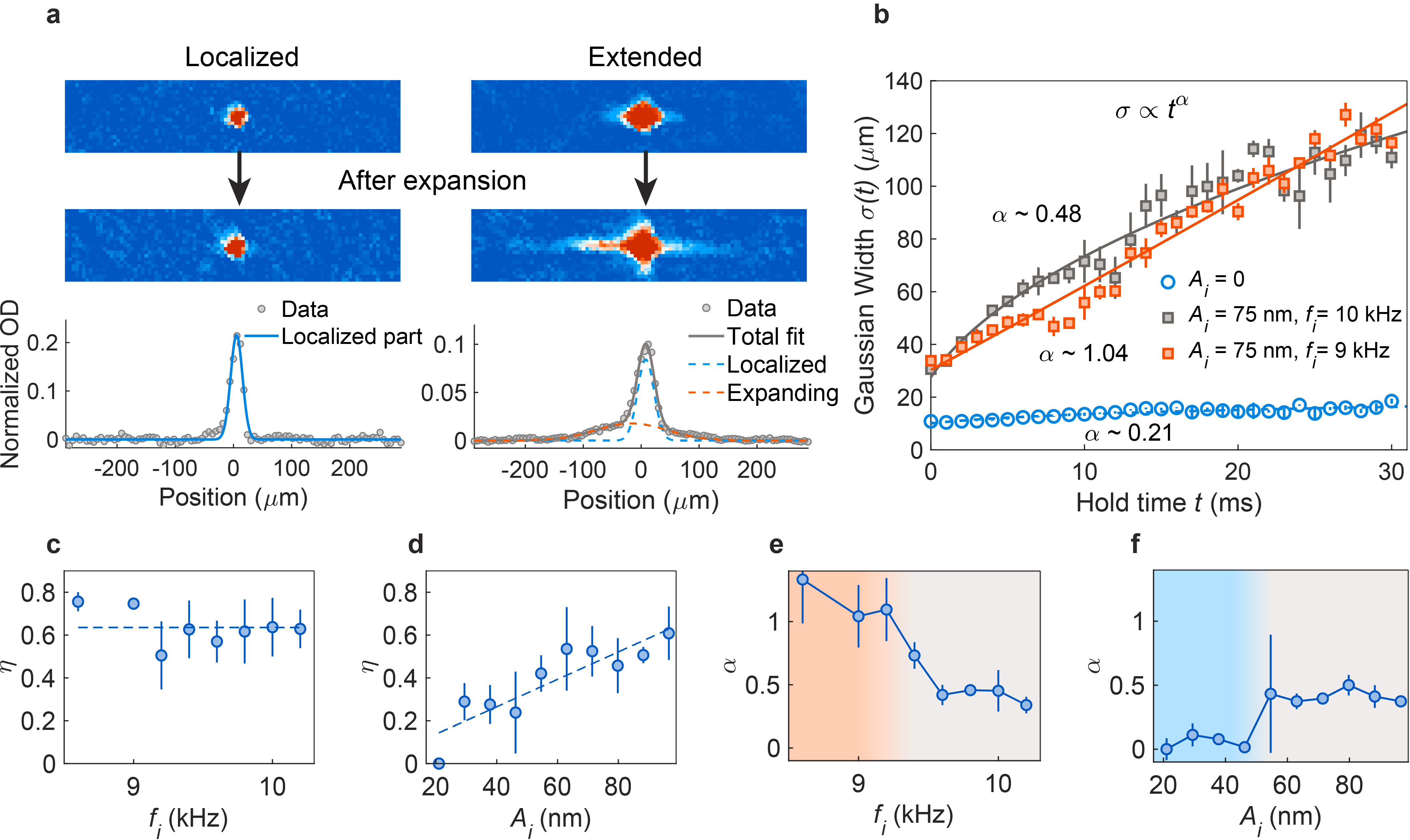}
    \caption{\textbf{Expansion dynamics.}
    \textbf{a}, Bimodal Gaussian fit of the expanded atomic cloud. The upper panel shows the in-situ absorption image, while the lower panels present Gaussian fits to the corresponding one-dimensional density profiles. In the localized regime, the atomic cloud remains nearly stationary after sudden trap release and a \SI{15}{ms} hold; a single Gaussian fit yields the cloud radius. In extended and critical regimes, significant expansion leads to a bimodal distribution. A bimodal Gaussian fit then determines the radii of both the expanding and localized components.
    \textbf{b}, Time evolution of the radii of three types of states. Radii are fitted to $\sigma=\sigma_0(1+t/t_0)^\alpha$ \cite{shimasaki2024anomalous} to extract the dynamical index $\alpha$. At $f_i=\SI{10}{kHz}$ (critical states), the expanding part shows $\alpha\approx 0.5$, while $\alpha\approx 1$ at $f_i=\SI{9}{kHz}$ (extended states), and $\alpha\approx 0.2$ when $A_i=0$ (localized states), consistent with theoretical predictions.
    \textbf{c},\textbf{d}, Fraction of the expanding component $\eta$ versus initial shaking frequency and amplitude. At fixed amplitude $A_i=\SI{75}{nm}$ (\textbf{c}), the fraction stays above 50\% with little variation in $f_i$. At fixed frequency $f_i=\SI{9}{kHz}$ (\textbf{d}), the fraction increases with amplitude, indicating that stronger shaking transfers more atoms to expanding states.
    \textbf{e},\textbf{f}, Scaling exponent $\alpha$ versus $f_i$ and $A_i$. In (\textbf{e}), $\alpha$ decreases sharply from ~1 to 0.5 near $f_i=\SI{9.4}{kHz}$, marking the transition from extended to critical states. In (\textbf{f}), $\alpha$ rises abruptly from 0 to ~0.5 near $A_i=\SI{50}{nm}$, indicating the transition from localized to critical states. Radii are determined adaptively: when the expanding fraction exceeds 30\%, the expanding part is used; otherwise, the localized part is taken.}
    \label{fig:4}
\end{figure*}

\section{Measuring the expansion exponents via transport dynamics}

The tripartite phase can be further directly observed by measuring the multiple universal expansion dynamics of the prepared wave-packets. When atoms are released into the $s$-$p$ orbital quasiperiodic lattice, their spatial width $\sigma(t)$ follows a universal power-law growth, $\sigma(t) \sim t^\alpha$, where the scaling exponent $\alpha$ serves as a dynamical signature of the underlying quantum states \cite{Hiramoto1988Dynamics,Zhong2001PRL,Wang2020PRL}. Theoretically, the three distinct types of states are characterized by specific exponents: $\alpha = 1$ for ballistic expansion (extended states), $\alpha = 0$ for asymptotic localization (localized states), and $\alpha \approx 0.5$ for anomalous diffusion (critical states).

The expansion measurements are initiated by abruptly switching off the optical dipole trap and the magnetic gradient field after the two-stage preparation sequence, while a sheet-light potential is switched on to levitate the atoms against gravity. %Driven by their fundamentally different transport natures, the three types of states can be distinguished after expansion. Experimentally, we observe three distinct dynamical behaviors as shown in Fig.~\ref{fig:4}(\textbf{B}), where the blue, red, and gray curves correspond to the localized, extended, and critical behaviors, respectively.
To quantitatively extract the scaling exponents of the expansion dynamics, we employ a bimodal Gaussian decomposition to model the evolving density profiles, as illustrated in Fig.~\ref{fig:4}a. Although our two-stage protocol is designed to selectively target specific energy sectors (e.g., critical states), a residual fraction of atoms inevitably persists in the initial localized $s$-orbital state due to finite inter-orbital coupling efficiency and inherent relaxation. Consequently, the expanding cloud manifests a characteristic bimodal spatial structure: a rapidly spreading component superposed on a stationary localized core. As shown in the \textit{in-situ} density distributions $n(x)$, the extended states exhibit a clear bimodal profile after $t = \SI{15}{ms}$. We therefore independently fit the temporal evolution of the widths for these distinct components using the power-law form  (Methods)~\cite{Supplementary}:
\begin{eqnarray}
\sigma = \sigma_0 (1 + t/t_0)^\alpha.
\end{eqnarray}
For the localized component, the extracted exponent $\alpha$ remains near zero. In contrast, for the expanding component, we resolve two distinct signatures: $\alpha \approx 1$ and $\alpha \approx 0.5$, corresponding to the successful preparation of extended and critical states, respectively [see Fig.~\ref{fig:4}b].

The population distribution between the localized and expanding components is further analyzed. The fraction of the expanding component $\eta$ is plotted against the preparation parameters $f_i$ and $A_i$ in Fig.~\ref{fig:4}c,d. Our data reveal that $\eta$ remains relatively insensitive to the initial frequency $f_i$ but increases significantly with the shaking amplitude $A_i$. This confirms that a stronger driving amplitude enhances the coupling to the $p$-orbital manifold, thereby transferring a larger fraction of atoms into the delocalized states.

To establish the universality of these transport signatures, we systematically scan the preparation parameters and extract the resulting expansion exponents, as shown in Fig.~\ref{fig:4}e,f. By varying $f_i$, we observe that $\alpha$ exhibits a sharp crossover from $\sim 1$ to $\sim 0.5$ at $f_i \approx \SI{9.4}{kHz}$, marking the transition from the extended to the critical states. Similarly, a scan of the amplitude $A_i$ reveals a clear transition from $\alpha \approx 0$ to $\alpha \approx 0.5$ as the shaking amplitude increases, identifying the transition between localized and critical states. These dynamical measurements align with our earlier state-preparation discussions and provide rigorous experimental verification of the prepared states. Collectively, these observations offer a systematic demonstration of the distinct expansion exponents for localized, extended, and critical states in this cold-atom platform, confirming the successful realization and observation of rich physics of the tripartite quantum phase.

\section{Conclusion and discussion}
We have reported the first experimental realization of the tripartite quantum phase in a quasiperiodic orbital optical lattice, featuring the coexistence of extended, localized, and critical states, which is not accessible in randomly disordered systems. Through a novel two-stage Floquet protocol, we achieve the precise preparation and characterization of these distinct types of states across the spectrum. The universal transport dynamics for the three types of states are observed, further confirming the realized tripartite quantum phase.

This work opens up a broad avenue to explore the previously inaccessible localization physics based on
quasiperiodic shaken lattices. Our Floquet modulation scheme not only establishes a versatile platform that can be applied to synthesize broad range of nontrivial quasiperiodic models, %beyond the current setup,
but also circumvents the state-preparation challenges inherent to static systems, offering a powerful toolbox for the precise realization and detection. In the noninteracting regime, it can accommodates internal degrees of freedom such as spin and orbital and can be extended to higher dimensions, thereby providing a clear pathway toward the experimental observation of all seven fundamental localization phases~\cite{Zhou2026}.  Furthermore, the present realization with ultracold atoms can be naturally applied to study interacting regimes, with the ability to selectively populate many-body eigenstates in targeted energy regions. %This capability lays the foundation to investigate the localization and critical phenomena under interactions, particularly the profound many-body critical phase~\cite{Wang2021PRL}, and provides the unprecedented opportunities to probe various many-body mobility edges~\cite{refs1,refs2} and interaction-driven phase transitions~\cite{refs3}, which are hitherto the major challenges in this direction, now becoming promising based on the present framework.
This capability lays the foundation to investigate interaction-driven localization and critical phenomena, including the profound many-body critical phase~\cite{Wang2021PRL}. It also offers unprecedented opportunities to probe the elusive many-body mobility edges~\cite{Laumann2014PRL,Naldesi2016SciPost,Roeck2016Absence,Li2015} and interaction-induced phase transitions~\cite{Yu2024BG}. These crucial and challenging issues are now within reach under the present framework.
%\vspace{0.3cm}

%\clearpage
\section*{Methods}
%\subsection*{Experimental setup}
\subsection{Orbital Quasiperiodic Lattice}
In this section, we derive the effective tight-binding model that captures the essential physics of the tripartite quantum phase. We achieve this by starting from a continuous-space description of the experimental system and projecting its dynamics onto a localized two-orbital basis. Experimentally, our setup consists of ultracold atoms confined in a one-dimensional quasiperiodic optical lattice, which is formed by superimposing a deep primary lattice with a weaker, incommensurate secondary lattice. To induce resonant transitions between the lowest two energy bands ($s$ and $p$ orbitals), the secondary lattice is periodically driven, or "shaken", through a time-dependent spatial modulation~\cite{Songbo2022}. In the laboratory frame, the dynamics of this driven quasiperiodic system are governed by the following effective Hamiltonian~\cite{Sakurai_Napolitano_2020}:
\begin{eqnarray}\label{eq: Hc}
%\begin{split}
	\hat{H}(t)&=&\int \mathrm{d}x\, \hat{\psi}^{\dagger}(x)\Big[
	-\frac{\hbar^2}{2m}\partial_x^2
	-V_0\sin^2(k_0x) \nonumber\\
	&&+V_1\sin^2\!\big(k_1(x-s(t))+\phi\big)
	\Big]\hat{\psi}(x),
%\end{split}
\end{eqnarray}
where $\hat{\psi}(x)$ is the bosonic field operator, $m$ the atomic mass, and $V_0$ ($V_1$) the depths of the primary (secondary) lattices. The secondary lattice displacement $s(t)=A\sin(\omega t)$ resonantly couples the lowest two energy bands~\cite{Songbo2022}.

To obtain a lattice model, we expand the field operator in terms of localized Wannier functions associated with the $s$ and $p$ orbitals of the primary lattice potential,
\begin{equation}
	\hat{\psi}(x)=\sum_j \Big[w_{s}(x-x_j)\,\hat a_j
	+w_{p}(x-x_j)\,\hat b_j\Big],
\end{equation}
where we have expanded the bosonic field operator in the localized Wannier basis of the lowest two orbitals ($s$ and $p$) centered at the lattice sites $x_j$. Here $\hat a_j$ and $\hat b_j$ are the annihilation operators for a particle occupying the $s$- and $p$-orbital Wannier states $w_s(x-x_j)$ and $w_p(x-x_j)$, respectively. Substituting this expansion into Eq.~\eqref{eq: Hc} and projecting onto the two-band Wannier subspace yields an effective tight-binding description. In carrying out this projection we retain the leading single-particle terms that are relevant for the physics discussed in the main text: the orbital-resolved on-site energies $\epsilon_{s,p}$, the quasiperiodic on-site modulations $\delta_{s,p}\cos(2\pi\alpha j)$ arising from the secondary lattice, the interorbital coupling terms (a static component $\delta_{sp}\cos(2\pi\alpha j)$ and a time-dependent component $\eta_{sp}\sin(2\pi\alpha j)\sin(\omega t)$ originating from the resonant lattice shaking), and the dominant nearest-neighbor hoppings $J_s$ and $J_p$ within each orbital. Terms of higher order in the Wannier overlap (e.g. longer-range hopping, small off-diagonal hoppings between different orbitals on neighboring sites, and interaction terms) are here neglected because the primary lattice is sufficiently deep that Wannier functions are strongly localized and nearest-neighbor processes dominate. This projection therefore yields the following reduced Hamiltonian (retaining the dominant single-particle contributions) which captures the resonant $s$–$p$ physics induced by the shaken quasiperiodic potential.
\begin{eqnarray}
	\hat H(t)&=&\sum_j \Big[(\epsilon_s+\delta_s \cos(2\pi \alpha j)\Big] \hat a_j^\dagger\hat a_j\nonumber\\
	&+&\sum_j \Big[(\epsilon_p+\delta_p \cos(2\pi\alpha j)\Big] \hat b_j^\dagger\hat b_j\nonumber\\
	&+&\sum_j \Big\{\bigr[(\delta_{sp}\cos(2\pi\alpha j)
	+\eta_{sp}\sin(2\pi \alpha j)\sin(\omega t)\bigr]\,\hat a_j^\dagger\hat b_j
	+\nonumber\\
	&-&J_s\hat a_j^\dagger\hat a_{j+1}
	+J_p\hat b_j^\dagger\hat b_{j+1}+\mathrm{H.c.}\Big\},
	\end{eqnarray}
where $\epsilon_s$ and $\epsilon_p$ denote the average on-site energies, $J_s$ and $J_p$ are the nearest-neighbor hopping amplitudes, $\delta_s$ and $\delta_p$ are the quasiperiodic onsite potentials, and $\eta_{sp}$ encodes the quasiperiodic interband coupling.

To eliminate the explicit time dependence and capture the resonant coupling between the $s$ and $p$ orbitals, we perform a rotating-wave transformation. Applying the unitary transformation $\hat{R}=\exp(i \omega t \sum_j \hat{b}_j^{\dagger} \hat{b}_j)$, the Hamiltonian in the rotating frame becomes $\hat{H}_{\mathrm{rot}}=\hat{R}[\hat{H}(t)-i\hbar \partial_t]\hat{R}^\dagger$. Retaining only the time-independent (zeroth-order) terms yields the effective static Hamiltonian:
\begin{eqnarray}
		\hat H_{\mathrm{eff}}&=&\sum_j \Big[\delta_s \cos(2\pi \alpha j)\, \hat a_j^\dagger\hat a_j
		+\big(\Delta +\delta_p \cos(2\pi\alpha j)\big)\hat b_j^\dagger\hat b_j\Big]\nonumber\\
		&+&\sum_j \Big(\tfrac{1}{2}\mathrm{i}\eta_{sp} \sin(2\pi \alpha j)\, \hat a_j^\dagger\hat b_j
		+\mathrm{H.c.}\Big)\nonumber\\
		&+&\sum_j \Big[-J_s\hat a_j^\dagger\hat a_{j+1}
		+J_p \hat b_j^\dagger\hat b_{j+1}+\mathrm{H.c.}\Big],
	\end{eqnarray}
	\label{eq:H}
where $\Delta=\epsilon_p-\epsilon_s-\hbar \omega$ is the effective detuning between the $p$ and $s$ orbitals\cite{Eckardt2017Colloquium,Bukov2015high-frequency}.
For realistic experimental parameters, such as a shaking amplitude of $A = \SI{42}{nm}$, the relevant model parameters are derived from the band dispersion and Wannier functions obtained via the Mathieu equations \cite{Drese1997Mathieu}. The resulting values are $\Delta_E = \epsilon_p - \epsilon_s = \SI{10.2}{kHz}$, $J_s = \SI{0.039}{kHz}$, $J_p = \SI{0.497}{kHz}$, $\delta_s = \SI{1.3}{kHz}$, $\delta_p = -\SI{0.8}{kHz}$, and $\eta_{sp} = \SI{1.13}{kHz}$. Notably, since this estimation neglects the deformation of the primary lattice Wannier functions induced by the secondary lattice, the actual quasiperiodic modulation strengths $\delta_s$ and $\delta_p$ are expected to be smaller than these calculated values. Defining $\Omega \equiv \eta_{sp}/2 = \SI{0.56}{kHz}$, the dimensionless parameters in units of $J_p$ are given by $\delta_p/J_p \approx 1.6$ and $\Omega/J_p \approx 1.1$. Under these conditions, the ratio $\delta_p/J_p < 2$ indicates that atoms in the $p$ orbital reside in the delocalized regime. In contrast, the $s$-band atoms are already deeply localized due to the suppressed tunneling $J_s \ll J_p$. Since the inter-orbital coupling $\Omega$ is comparable to $J_p$, it plays a decisive role in reshaping the localization landscape. As confirmed by the analytical and numerical investigations presented below, the system hosts a tripartite quantum phase consisting of localized $s$-orbital states, extended $p$-orbital states, and critical states emerging from the nontrivial $s$-$p$ coupling.

\subsection{Renormalization-group analysis of critical states}

To rigorously establish the existence of critical states in the two-orbital quasiperiodic model, we apply a renormalization-group (RG) method based on iteration of commensurate approximations~\cite{Miguel2022}. In this framework, the irrational frequency $\alpha$ is systematically approximated by a sequence of rationals $\alpha^{(n)}=p_n/q_n$, each defining a periodic Hamiltonian $H^{(n)}$ with supercell size $L=q_n$. The twisted boundary condition and the phase offset then serve as two quasi-momenta $\kappa_x$ and $\kappa_y$, mapping the problem onto a two-dimensional Bloch Hamiltonian. The RG flow as $n\to\infty$ dictates the asymptotic localization properties of the quasiperiodic eigenstates.

For each approximant, we evaluate the characteristic determinant $P^{(L)}(E;\kappa_x,\kappa_y)=\det(\mathcal{H}^{(L)}-E)$ and expand it in harmonics of the quasi-momenta. The leading Fourier coefficients define three renormalized parameters: $t_R^{(L)}$ (hopping along $\kappa_x$), $V_R^{(L)}$ (hopping along $\kappa_y$), and $C_R^{(L)}$ (diagonal coupling). To simplify the analysis, we work in the experimentally relevant regime $J_s\to 0$, where the $s$-orbital is frozen and the model reduces to an effective single-component quasiperiodic chain. In this limit, the $\kappa_x$-dependent terms arise solely from products of the $p$-orbital hopping $J_p$, and the $\kappa_y$-dependent terms are dominated by products of the $2\times 2$ on-site block determinants; higher-order harmonics become irrelevant under the RG flow. The phase is determined by the relative scaling of the three renormalized parameters: extended states correspond to dominant $t_R^{(L)}$, localized states to dominant $V_R^{(L)}$, and critical states emerge when all three terms remain comparably relevant.

Using a Chebyshev-polynomial identity for products of cosines over incommensurate phases, we evaluate these coefficients in closed form. The key results at energy $E=0$ are
\begin{equation}
C_R^{(L)}(0) \sim \Bigl(\tfrac{J_p\,\delta_s}{2}\Bigr)^{\!L},\qquad
\frac{C_R^{(L)}(0)}{V_R^{(L)}(0)} \sim \Bigl(\frac{2\,J_p\,\delta_s}{\delta_s\delta_p+\Omega^2}\Bigr)^{\!L},
\end{equation}
while $C_R^{(L)}(0)/t_R^{(L)}(0)=1$ holds exactly. Critical states at zero energy therefore persist whenever $C_R^{(L)}$ and $V_R^{(L)}$ scale comparably, which requires $2J_p\delta_s\geq\delta_s\delta_p+\Omega^2$, yielding the condition $\Omega^2+\delta_s\delta_p\leq 2\delta_s J_p$ stated in the main text. This analytically derived criterion is satisfied under the experimental parameters and provides a rigorous foundation for the tripartite phase. Detailed derivations are provided in the Supplementary Information~\cite{Supplementary}.

\subsection{Experimental Protocol}
A Bose–Einstein condensate of \num{1e5} atoms of \ce{^{87}Rb} in the $|F = 2, m_F = 2\rangle$ state is adiabatically loaded into a bichromatic lattice. Here $F$ denotes the total angular momentum and $m_F$ the magnetic quantum number of the state. All lattice depths are exponentially increased from zero to their final values in \SI{80}{ms} with a time constant of \SI{20}{ms}. After a hold time of \SI{13}{ms}, we begin to modulate (shake) the position of the secondary lattice by sinusoidally changing the corresponding laser frequency. To this end, we modulate the radio frequency(RF) feeding an acousto-optic modulator (AOM, MT110-B50A1-IR) in a double-pass configuration\cite{Songbo2022}. The resulting frequency shift of the lattice laser $\nu_s(t)$ becomes
\begin{equation}
	\nu_s(t)=\nu_c+\delta \nu_s \sin(2\pi f t),
\end{equation}
where $\nu_c=\SI{110}{MHz}$ is the center frequency of the AOM, $\delta \nu_s$ is the modulation depth, and $f$ denotes the modulation frequency. For a fixed distance $l_0=\SI{0.9}{m}$ between the atoms and the retro-reflection mirror, this frequency modulation leads to a time-dependent displacement of the lattice $s(t) = A \sin(2\pi f t)$, where the shaking amplitude $A$ denotes the maximum spatial displacement of the lattice. The amplitude $A$ is proportional to $\delta \nu_s$, and its absolute value is calibrated by the geometric relation between the laser wavelength, the retro-mirror distance $l_0$, and the AOM frequency change. The lattice constant is $d=\lambda_0/2=\SI{532}{nm}$, the waist of the primary lattice beam is \SI{100}{\micro m}, and the waist of the secondary lattice beam is \SI{150}{\micro m}. The momentum distribution is measured via time-of-flight absorption imaging.

In practice, we employ a Keysight 33600B waveform generator to control the frequency modulation input of a Rigol DG922 Pro, which provides the RF signal to the AOM. In this way, the Keysight 33600B controls the modulation of the AOM frequency and thus the modulation of the lattice position.

In our two-step preparation sequence, we first ramp up the shaking amplitude from 0 to $A_i$ within $t_0=\SI{1}{ms}$ while keeping the frequency fixed at $f_i$. We then linearly ramp the frequency from $f_i$ to $f_f=\SI{10.2}{kHz}$ up to $t_1=\SI{5}{ms}$, followed by a linear ramp of the amplitude from $A_i$ to $A_f=\SI{15}{nm}$ until $t_2=\SI{7}{ms}$. The final time of the shaking sequence is denoted as $t_3$. The time dependence of the amplitude is
\begin{equation}
	\begin{aligned}
		&A(t)= \begin{cases}
			A_i \frac{t}{t_1}, & 0 \leq t < t_1, \\
			A_i, & t_1 \leq t < t_2, \\
			\left(A_f-A_i\right) \frac{t-t_2}{t_3-t_2}+A_i, & t_2 \leq t \leq t_3,
		\end{cases}
	\end{aligned}
\end{equation}
while the frequency evolves as
\begin{equation}
	\begin{aligned}
		&f(t)= \begin{cases}
			f_i, & 0 \leq t < t_1, \\
			\left(f_f-f_i\right) \frac{t-t_1}{t_2-t_1}+f_i, & t_1 \leq t < t_2, \\
			f_f, & t_2 \leq t \leq t_3.
		\end{cases}
	\end{aligned}
\end{equation}
The timing sequence is illustrated in Supplementary Figure.~S2.

\subsection{Expansion Dynamics and Radius Extraction}

After preparing the final state, we measure the expansion dynamics by simultaneously switching off the optical trap and the magnetic trap that compensates gravity, thereby releasing the atoms from the external harmonic confinement. A levitating sheet beam is turned on at the same time to compensate gravity during free expansion. This beam is nearly collinear with the primary lattice (relative angle $\approx 10^{\circ}$), with a horizontal waist $w_h = 400{\mu m}$ and a vertical waist $w_v = 60{\mu m}$; the large horizontal extent ensures that atoms can expand freely along the lattice direction without being affected by the intensity profile of the levitating beam.

We observe that the atomic density distribution after a finite expansion time exhibits a bimodal profile, with a fraction of atoms expanding rapidly and the remainder staying localized. This bimodality arises because not all atoms are adiabatically transferred into the target state during the preparation sequence: a portion remains in the initial $s$-band localized state and is not excited into the $p$-band coupled regime. To quantitatively extract the expansion radius, we fit the measured density distribution with a double-Gaussian function,
\begin{equation}
\begin{split}
n(x) = \,& A_\mathrm{Loc}\exp\!\left[-\frac{(x-x_\mathrm{Loc})^2}{2\sigma_\mathrm{Loc}^2}\right] \\
&+ A_\mathrm{Exp}\exp\!\left[-\frac{(x-x_\mathrm{Exp})^2}{2\sigma_\mathrm{Exp}^2}\right],
\end{split}
\end{equation}
where $\sigma_\mathrm{Loc}$ and $\sigma_\mathrm{Exp}$ denote the widths of the localized and expanding components, respectively, and $A_\mathrm{Loc}$, $A_\mathrm{Exp}$ are their corresponding amplitudes. The fraction of the expanding component is defined as
\begin{equation}
\eta = \frac{A_\mathrm{Exp}\,\sigma_\mathrm{Exp}}{A_\mathrm{Loc}\,\sigma_\mathrm{Loc} + A_\mathrm{Exp}\,\sigma_\mathrm{Exp}}.
\end{equation}
When the final state is extended, the expansion radius $\sigma_\mathrm{Exp}$ is directly extracted from the fit. When the final state is localized, no significant expanding component is present, and the width of the localized component $\sigma_\mathrm{Loc}$ is used instead. To ensure consistency across different preparation parameters, we apply a threshold criterion: when $\eta \leq 30\%$, the reported radius is taken as $\sigma_\mathrm{Loc}$; otherwise it is taken as $\sigma_\mathrm{Exp}$. To ensure consistency across different preparation parameters, we apply a threshold
criterion: $\sigma = \sigma_\mathrm{Loc}$ when $\eta \leq 30\%$,
and $\sigma = \sigma_\mathrm{Exp}$ when $\eta > 30\%$,
ensuring that the extracted radius consistently reflects the physically relevant contribution in each regime.

%%%%%%%%%%%%%%%%%%%%%%%%%%%%%%%%%%%%%

\normalem
\bibliographystyle{naturemag}
\nolinenumbers
\bibliography{ref}

\paragraph*{\bf{Acknowledgement}}
We thank Hepeng Yao for helpful discussions. This work was supported by the National Key Research and Development Program of China (2021YFA1400900), National Natural Science Foundation of China (No.~12425401 and No.~12261160368), Quantum Science and Technology-National Science and Technology Major Project (2021ZD0302000), and Shanghai Municipal Science and Technology Major Project (2019SHZDZX01). X.-J.L. was also supported by the New Cornerstone Science Foundation through the XPLORER PRIZE.

\paragraph*{\bf{Author Contributions}}
X.-J. L. conceived the research and designed the experimental scheme. Z. H., Y. G., Y.-D. W., Z. Q., J. Y., X. C., and S. J. set up the experiment. Z. H., Y. G., and S. J. performed the measurements. Z. H., Y. G., and Y.-D. W. analyzed the data. Z. H., B.-C. Y., X.-C. Z., B.-Z. W., Y. G., Z. Q. provided the theoretical calculations. S. J. and X.-J. L. supervised the research. Z. H., Y. G., S. J., and X.-J. L. wrote the paper.  All authors read, edited, and approved the final manuscript.

%\paragraph*{\bf{Data Availability}}

\clearpage
\onecolumngrid

% 重新开始 section 编号
\setcounter{section}{0}
\setcounter{subsection}{0}
\setcounter{figure}{0}
\setcounter{table}{0}
\setcounter{equation}{0}

\section*{Supplementary Material}

\section{Tripartite Quantum Phase}
\subsection{Analytical analysis}
In the following, we first perform a quantitative analysis based on the concept of incomemensurately distributed zeros (IDZs)~\cite{avila2015global,Zhou2023PRL,zhou2025fundamental}, which provides a mechanism for the emergence of a tripartite quantum phase featuring the coexistence of extended, localized, and critical states. After that, we provide a renormalization group treatment~\cite{Miguel2022,Gonfmmode2023Renorm}, which independently corroborates the existence of critical states at zero energy. 

% \subsubsection{Mechanism for the emergence of critical states}
\subsubsection{Mechanism for the tripartite phase}
As outlined in the main text, the coexistence of extended, localized, and critical states in the present model can be understood by analyzing eigenstates residing in distinct energy regimes. For convenience,we freeze the $s$ orbital by taking $J_s\to0$ and consider parameters satisfying $J_p\ge\delta_s\sim\delta_p>\Omega$. In this regime, spectrum with energies $|E|\le\delta_s$ displays emergent spectral singularities, leading to either localized or critical behavior depending on the eigen-energy, whereas eigenstates with $|E|>\delta_s$ are free of such singularities and remain extended, as we provide detailed analysis below. 

To facilitate the discussion, we rewrite the Hamiltonian Eq.~(8) in main text as
\begin{equation}
\hat H=\sum_j \left(\hat c^\dagger_{j+1} \Pi_j \hat c_j +\mathrm{H.c.}\right)+\hat c^\dagger_{j} M_j \hat c_j,
\end{equation}
with $\hat c^\dagger_j=(a^\dagger_j,b^\dagger_j)$, and the hopping coupling matrix and on-site matrix are given by
\begin{equation}\label{eq:Pauli-Coupling}
    \begin{aligned}
        \Pi_j=&
        \begin{pmatrix}
            -J_s&0\\
            0&J_p
        \end{pmatrix}=t_+\sigma_0+t_-\sigma_z,\\
        M_j=&
        \begin{pmatrix}
            \delta_s \cos(2\pi \alpha j)&\mathrm{i}\Omega \sin(2\pi \alpha j)\\
            -\mathrm{i}\Omega \sin(2\pi \alpha j)&\delta_p \cos(2\pi \alpha j)
        \end{pmatrix}=\bar\delta \cos(2\pi \alpha j)\sigma_0+\Omega \sin(2\pi \alpha j)\sigma_y+\delta' \cos(2\pi \alpha j)\sigma_z.
    \end{aligned}
\end{equation}
Here $t_\pm=(-J_s\pm J_p)/2,\, \bar \delta=(\delta_s+\delta_p)/2,\, \delta'=(\delta_s-\delta_p)/2$, and $\sigma_{0,x,y,z}$ are the identity and Pauli matrices. For experimentally relevant parameters, $J_s\ll J_p$, allowing us to approximate $J_s\simeq0$. Under this condition, the hopping coupling matrix satisfies $\det{\Pi_j}=0$, allowing the model to be effectively reduced to a spinless form. For an eigenstate $H|\psi\rangle=E|\psi\rangle$, we denote the wavefunction as $|\psi\rangle=\sum_j (u_j \hat b_j^\dagger + v_j \hat a_j^\dagger)|0\rangle$, where $u_j$ and $v_j$ denote amplitudes on the $p$ and $s$ orbitals, respectively. The corresponding coupled equations read
\begin{equation}
    \begin{aligned}
    J_p u_{j+1}+J_p u_{j-1}+\delta_p \cos (2 \pi \alpha j) u_j+\mathrm{i} \Omega \sin (2 \pi \alpha j) v_j&=E u_j, \\
    \delta_s \cos (2 \pi \alpha j) v_j-\mathrm{i} \Omega \sin (2 \pi \alpha j) u_j&=E v_j.
    \end{aligned}
\end{equation}

By algebraically eliminating $v_j$, corresponding physically to integrating out the frozen $s$ orbital, one obtains an effective single-component recursion relation for $u_j$. The resulting equation reads
% As a result, the dynamics of the system can be effectively captured by the evolution of the $p$-orbital amplitudes $u_j$, while the effect of the eliminated component is absorbed into an energy-dependent modification of the on-site potential term. This leads to the following effective single-component equation:
\begin{equation}
     J_p \left(u_{j-1}+u_{j+1}\right)+\left[\delta_p \cos (2 \pi \alpha j)+\frac{\Omega^2 \sin ^2(2 \pi \alpha j)}{E-\delta_s \cos (2 \pi \alpha j)}\right] u_j=E u_j.
\end{equation}
The system is thus effectively mapped onto a spinless quasiperiodic model with uniform hopping amplitude $J_p$. The energy-dependent on-site potential encodes the influence of $s$-$p$ inter-orbital coupling and is given by
\begin{equation}
V_{\mathrm{eff},j}=\delta_p \cos (2 \pi \alpha j)+\frac{\Omega^2 \sin ^2(2 \pi \alpha j)}{E-\delta_s \cos (2 \pi \alpha j)}.
\end{equation}
% The second term in $V_{\mathrm{eff},j}$ arises from the virtual hybridization between the $s$ and $p$ orbitals and introduces both energy and position dependence into the potential landscape, thereby enriching the localization properties and allowing for the emergence of critical states in the effective quasiperiodic model.

This expression immediately reveals a qualitative distinction between energy regimes. For $|E|>\delta_s$, the denominator remains finite for all $\tilde{j}$, ensuring that $V_{\mathrm{eff},\tilde{j}}$ is bounded and smooth. In contrast, for $|E|<\delta_s$, there exist incommensurate sites $j_0$ satisfying $E-\delta_s \cos (2 \pi \alpha j_0)=0$, at which $V_{\mathrm{eff},j}$ diverges.

The eigenstates close to the band edge, with $|E|>\delta_s$, are extended states. Since such energies lie outside the range of the $s$-orbital on-site energy, they are mainly contributed by the $p$-orbital component, where the hopping amplitude $J_p$ dominates over the quasiperiodic modulation, thus resulting in extended states.

For the eigenstates within the $s$ orbital onsite energy $|E|<\delta_s$, the eigenstates can be either localized or critical states, depending on the details of the state. The inter-orbital coupling in Eq.~(\ref{eq:Pauli-Coupling}) facilitates a complete transition from $p$ to $s$ orbitals at the $\tilde j$-sites where $\delta_{s,\tilde j}=E$, which are incommensurately distributed throughout the system. This full transition to the $s$-orbital hinders the tunneling of the $p$ orbitals across those sites, resulting in incommensurately distributed zeros (IDZs) in hopping—a central mechanism leading to the emergence of critical states, as captured by Avila's profound global theory~\cite{avila2015global,Zhou2023PRL}. 

With the presence of IDZs, two distinct scenarios follow. For states with $|E|\lesssim\delta_s$, eigenstates reside near extrema of the onsite potential $\delta_{i,j}$ and are significantly influenced by the quasiperiodic potential, resulting in localized states. In contrast, states with $E\sim0$ are primarily located at the nodes of the onsite potential $\delta_{s,j}\sim \cos(2\pi \alpha j)$, where the inter-orbital coupling $\sim \sin(2\pi \alpha j)$ is relatively strong. These states are less affected by the quasiperiodic potential, resulting in delocalization. Consequently, the IDZs effectively partition the lattice into segments, within which the delocalized wavefunction reorganizes self-similarly, giving rise to scale-free critical states.

The appearance of critical states near $E\sim0$ motivates a  renormalization-group analysis of the coupling flow at zero energy, which we present in the following section.

\subsubsection{Renormalization group analysis}

The statement that the eigenstates near $E \sim 0$ are critical can be further corroborated through a renormalization group analysis based on iteration of commensurate approximation~\cite{Miguel2022,Gonfmmode2023Renorm}. The renormalization group (RG) framework analyzes the relevance of effective hopping coefficients by iteratively applying rational approximations to the quasiperiodic (QP) parameter. Concretely, one considers a sequence of rationals $\alpha^{(n)} = p_n/q_n$ that define a family of periodic Hamiltonians $H^{(n)}$, which recover the true quasiperiodic regime only in the irrational limit $\alpha = \alpha^{(\infty)}$. The RG flow as $n \rightarrow \infty$ then dictates the asymptotic properties of the eigenstates in the QP system. For each rational approximation, the spectrum of $H^{(n)}$ forms a periodic function of the quasi-momenta $\kappa_x$ and $\kappa_y$. To systematically track how the renormalized coefficients evolve with system size, we evaluate the characteristic determinant
\begin{equation}
 P^{(n)} = \big|\mathcal{H}^{(n)} - E \big|
\end{equation}
 at a target energy $E$ for a system size $L=F_n$, which can be expressed as
\begin{equation}
 \begin{aligned}
 P^{(n)}(E; \kappa_x,\kappa_y) & = t_R^{(n)} \cos(\kappa_x + \kappa_x^0) + V_R^{(n)} \cos(\kappa_y + \kappa_y^0) \\
 & \quad + C_R^{(n)} \cos(\kappa_x + \tilde{\kappa}_x^0)\cos(\kappa_y + \tilde{\kappa}_y^0) \\
 & \quad + \epsilon_R^{(n)}(E,\varphi,\kappa) + T_R^{(n)}(E),
 \end{aligned}
\end{equation}
 where $\epsilon_R^{(n)}$ contains higher-order harmonics that become irrelevant under the RG flow.

The long-scale nature of the eigenstates is governed by the relative scaling of the renormalized hopping amplitudes. For extended states, hopping along the $x$ direction dominates, such that $\lvert C_R^{(n)}/t_R^{(n)} \rvert, \lvert V_R^{(n)}/t_R^{(n)} \rvert \rightarrow 0$. For localized states, hopping is primarily along $y$, giving $\lvert t_R^{(n)}/V_R^{(n)} \rvert, \lvert C_R^{(n)}/V_R^{(n)} \rvert \rightarrow 0$. Critical states emerge when all three hopping terms remain equally relevant, characterized by $\lvert C_R^{(n)}/V_R^{(n)} \rvert, \lvert C_R^{(n)}/t_R^{(n)} \rvert \gtrsim 1$. The RG flow thus provides a systematic means to identify transitions among extended, localized, and critical phases, as well as to locate mobility edges that appear as energy-dependent transition points.

To explicitly characterize the localized, critical, and extended regions in our incommensurate s-p orbital lattices, where we calculate the characteristic polynomial of commensurate approximates (CA) of the system at thermodynamic limit and compare different renormalized factors. The characteristic polynomial $P^{(L)}(E ; \varphi, \kappa)=\operatorname{det}(H(\varphi, \kappa)-E)$ for any CA at system length $L$, boundary phase twist $\kappa=L k$, accumulated initial phase $\varphi=L \phi$ and energy $E$ is calculated as

\begin{equation}
\begin{aligned}
P^{(L)}(E;\varphi,\kappa)
= \det\!\begin{pmatrix}
H_0 - E I_2 & T e^{i \kappa} & 0 & \cdots & 0 & T^\dagger e^{-i \kappa} \\
T^\dagger e^{-i \kappa}& H_1 - E I_2 & Te^{i \kappa} & \cdots & 0 & 0 \\
0 & T^\dagger e^{-i \kappa}& H_2 - E I_2 & \ddots & \vdots & \vdots \\
\vdots & \vdots & \ddots & \ddots & Te^{i \kappa} & 0 \\
0 & 0 & \cdots & T^\dagger e^{-i \kappa}& H_{L-2} - E I_2 & Te^{i \kappa} \\
T e^{i \kappa} & 0 & \cdots & 0 & T^\dagger e^{-i \kappa}& H_{L-1} - E I_2
\end{pmatrix},
\end{aligned}
\end{equation}
where 
\begin{equation}
T =
\begin{pmatrix}
0 & 0 \\
0 & J_p
\end{pmatrix},\quad
H_j =
\begin{pmatrix}
\delta_s \cos(2\pi\alpha j + \phi) & i \Omega \sin(2\pi\alpha j + \phi) \\
- i \Omega \sin(2\pi\alpha j + \phi) & \delta_p \cos(2\pi\alpha j + \phi)
\end{pmatrix},
\, j = 0,1,\ldots,L-1,
\end{equation}
and $I_2$ is the $2\times2$ identity matrix. This is a periodic function of $\kappa$ and $\varphi$. By expanding $P^{(L)}(E ; \varphi, \kappa)$ on different frequencies of $\kappa$ and $\varphi$, the factors before $e^{i \kappa}, e^{i \varphi}$ and $e^{i(\kappa+\varphi)}$, denoted as $t_R^{(L)}(E), V_R^{(L)}(E)$ and $C_R^{(L)}(E)$ respectively, determines whether a state is in localized, extended or critical phase. We analyze these factors by carefully counting the phase factors. To accumulate a $k$-dependence of $e^{i L k}$, the only choice is to multiple $L$ off-diagonal block $J_p$ terms together, thus the factors $C_R^{(L)}(E)$ and $t_R^{(L)}(E)$ are completely contributed from
\begin{equation}
J_p^L \Pi_{j=0}^{L-1}\left[\delta_s \cos (2 \pi j \alpha+\phi)-E\right]
\end{equation}
For a general $J_p$, the determination of $V_R^{(L)}(E)$ is complicated since after picking $L e^{i \phi}$ terms, the leftover $L$ terms are picked among $e^{i \phi}$ terms, $e^{i k}$ terms or constant phase dependency terms in various ways. However, for small $J_p$ compared to other parameters, the contribution from off-diagonal blocked can be safely ignored, leaving the coefficient determinable by multiplication of determinants of diagonal $2 \times 2$ blocks,

\begin{equation}
\Pi_{j=0}^{L-1}\left[\left(\delta_s \delta_p+\Omega^2\right) \cos ^2(2 \pi j \alpha+\phi)-E\left(\delta_s+\delta_p\right) \cos (2 \pi j \alpha+\phi)+\left(E^2-\Omega^2\right)\right]
\end{equation}
By evaluating these two expressions, we derive the desired renormalized factors. Calculation of all these factors thus break down to multiplications of the form $b \cos (2 \pi j \alpha+\phi)+a$, which we calculate exactly with the following lemma.

\begin{lemma}
For $\alpha=\frac{p}{L}$ with $p$ coprime to $L$, we have
\begin{equation}
\prod_{j=0}^{L-1}\!\bigl[b \cos (2 \pi j \alpha+\phi)+a\bigr]
= \left(-\tfrac{1}{2}\right)^{L-1} b^L \Bigl[ \cos (L \phi)-(-1)^L T_L\!\left(\tfrac{a}{b}\right)\Bigr],
\end{equation}
where $T_L$ is the $L$-th Chebyshev polynomial.
\end{lemma}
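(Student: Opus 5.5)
The plan is to factor each term of the product into linear factors in $e^{\mathrm{i}\theta}$ and then use the fact that the phases $\theta_j=2\pi j\alpha+\phi$, $j=0,\dots,L-1$, are exactly the $L$ roots of $z^L=e^{\mathrm{i}L\phi}$. This last fact is where coprimality enters. Since $\gcd(p,L)=1$, the map $j\mapsto jp \bmod L$ is a bijection of $\{0,\dots,L-1\}$. Hence $\{e^{\mathrm{i}\theta_j}\}=\{e^{\mathrm{i}\phi}e^{2\pi\mathrm{i}k/L}\}_{k=0}^{L-1}$, which gives the key identity
\begin{equation*}
\prod_{j=0}^{L-1}\bigl(z-e^{\mathrm{i}\theta_j}\bigr)=z^L-e^{\mathrm{i}L\phi}.
\end{equation*}

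Assume $b\neq 0$; the case $b=0$ would follow by continuity in $b$, after noting that $b^LT_L(a/b)$ is a polynomial in $(a,b)$. Choose $w\in\mathbb{C}\setminus\{0\}$ with $w+w^{-1}=-2a/b$. Expanding shows
\begin{equation*}
b\cos\theta+a=\tfrac{b}{2}\,e^{-\mathrm{i}\theta}\bigl(e^{\mathrm{i}\theta}-w\bigr)\bigl(e^{\mathrm{i}\theta}-w^{-1}\bigr),
\end{equation*}
because the right side equals $\tfrac b2\bigl(2\cos\theta-(w+w^{-1})\bigr)$. Taking the product over $j$ splits the computation into three pieces:
\begin{itemize}
\item the prefactor $(b/2)^L$;
\item the phase $e^{-\mathrm{i}\sum_j\theta_j}=e^{-\mathrm{i}L\phi}\,e^{-\mathrm{i}\pi p(L-1)}=e^{-\mathrm{i}L\phi}(-1)^{p(L-1)}$;
\item two root products, each of the form $\prod_j(e^{\mathrm{i}\theta_j}-c)=(-1)^L(c^L-e^{\mathrm{i}L\phi})$ by the key identity, with $c=w$ and $c=w^{-1}$.
\end{itemize}
The two factors $(-1)^L$ cancel. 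Multiplying out the root products gives
\begin{equation*}
\bigl(w^L-e^{\mathrm{i}L\phi}\bigr)\bigl(w^{-L}-e^{\mathrm{i}L\phi}\bigr)=e^{\mathrm{i}L\phi}\bigl[2\cos(L\phi)-(w^L+w^{-L})\bigr].
\end{equation*}
The factor $e^{\mathrm{i}L\phi}$ here cancels the phase $e^{-\mathrm{i}L\phi}$.

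Next, use the Chebyshev relation $w^L+w^{-L}=2T_L\bigl(\tfrac{w+w^{-1}}{2}\bigr)$, which is valid for complex $w$. With $\tfrac{w+w^{-1}}{2}=-a/b$ and the parity $T_L(-x)=(-1)^LT_L(x)$, this gives $w^L+w^{-L}=2(-1)^LT_L(a/b)$. Altogether,
\begin{equation*}
\prod_{j=0}^{L-1}\bigl[b\cos\theta_j+a\bigr]=2\left(\tfrac b2\right)^L(-1)^{p(L-1)}\Bigl[\cos(L\phi)-(-1)^LT_L(a/b)\Bigr].
\end{equation*}

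The last step, and the main place where care is needed, is the sign bookkeeping. We must show $(-1)^{p(L-1)}=(-1)^{L-1}$, so that $2(b/2)^L(-1)^{p(L-1)}=(-\tfrac12)^{L-1}b^L$. If $p$ is odd this is immediate. If $p$ is even, coprimality forces $L$ to be odd, so both signs equal $+1$. This sign check, together with tracking $\sum_j\theta_j$ modulo $2\pi$ (which uses only $\sum_j jp=p\,L(L-1)/2$), is the only delicate point. The rest is the root-of-unity factorization above.
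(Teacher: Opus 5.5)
Your proof is correct and follows the paper's argument essentially step for step. The paper likewise writes each factor as $\tfrac{b}{2}z_j^{-1}(z_j-r_+)(z_j-r_-)$ with reciprocal roots $r_\pm$ (your $w,w^{-1}$), extracts the phase $e^{-\mathrm{i}L\phi}(-1)^{p(L-1)}$ from $\prod_j z_j^{-1}$, collapses the root products using the $L$-th root of unity $\omega=e^{2\pi\mathrm{i}\alpha}$, and identifies $(r_+^L+r_-^L)/2=T_L(-a/b)=(-1)^LT_L(a/b)$. Your version is a bit more careful in three places the paper leaves implicit: you state where coprimality is used, you justify $(-1)^{p(L-1)}=(-1)^{L-1}$ when passing to the prefactor $(-\tfrac12)^{L-1}b^L$, and you handle the case $b=0$.
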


\begin{proof}
Take $z_j = e^{i(2 \pi j \alpha+\phi)}$ and $c=\tfrac{2 a}{b}$. The left-hand side can be rewritten as
\begin{align}
\prod_{j=0}^{L-1}[b \cos (2 \pi j \alpha+\phi)+a]
&= \left(\frac{b}{2}\right)^L \prod_{j=0}^{L-1}\!\left(z_j+z_j^{-1}+\frac{2 a}{b}\right) \notag = \left(\frac{b}{2}\right)^L \prod_{j=0}^{L-1}\!z_j^{-1}\,
\prod_{j=0}^{L-1}\!\left(z_j^2+c z_j+1\right).
\end{align}

The first product can be evaluated straightforwardly as
$
\prod_{j=0}^{L-1} z_j^{-1} = e^{-i L \phi}\, e^{-i \pi \alpha L(L-1)}.
$
For the second product, each quadratic factor can be decomposed as
$
z_j^2 + c z_j + 1 = (z_j - r_{+})(z_j - r_{-}),
$
where
$
r_{\pm} = \frac{-c \pm \sqrt{c^2 - 4}}{2}
$
are the two roots of the characteristic equation.  
To proceed, we introduce the $L$-th root of unity \( \omega = e^{i 2\pi \alpha} \), such that \( z_j = e^{i\phi} \omega^j \). Thus\begin{align}
\prod_{j=0}^{L-1}[b \cos (2 \pi j \alpha+\phi)+a]
&= \left(\frac{b}{2}\right)^L e^{i L \phi} \,\omega^{-\tfrac{L(L-1)}{2}}
\prod_{j=0}^{L-1}\!\bigl(-\omega^j+r_{+} e^{-i \phi}\bigr)
\prod_{j=0}^{L-1}\!\bigl(-\omega^j+r_{-} e^{-i \phi}\bigr) \notag \\
&= \left(\frac{b}{2}\right)^L e^{i L \phi} \,\omega^{-\tfrac{L(L-1)}{2}}
\bigl(r_{+}^L e^{-i L \phi}-1\bigr)\bigl(r_{-}^L e^{-i L \phi}-1\bigr) \notag \\
&= \left(\frac{b}{2}\right)^L (-1)^{p(L-1)} e^{i L \phi}
\left[ r_{+}^L r_{-}^L e^{-2 i L \phi} + 1 - (r_{+}^L+r_{-}^L) e^{-i L \phi}\right] \notag \\
&= \left(-\tfrac{1}{2}\right)^{L-1} b^L \left(\frac{e^{i L \phi}+e^{-i L \phi}}{2}
- \frac{r_{+}^L+r_{-}^L}{2}\right) \notag \\
&= \left(-\tfrac{1}{2}\right)^{L-1} b^L
\Bigl[\cos (L \phi)-(-1)^L T_L\!\left(\tfrac{a}{b}\right)\Bigr].
\end{align}
This proves the lemma.
\end{proof}

\bigskip

Using the lemma, the RG coefficients at $L \to \infty$ are obtained as
\begin{align}
t_R^{(L)}(E) &\sim \left(\tfrac{J_p \delta_s}{2}\right)^L T_L\!\left(\tfrac{E}{\delta_s}\right) \notag \sim 
\begin{cases}
\left(\tfrac{J_p \delta_s}{2}\right)^L, & |E|<|\delta_s|, \\[6pt]
\left[\tfrac{J_p\left(|E|+\sqrt{E^2-\delta_s^2}\right)}{2}\right]^L, & |E|>|\delta_s|,
\end{cases} 
\end{align}\label{eq:tR}

\begin{align}
V_R^{(L)}(E) 
\gtrsim &\left(\tfrac{\delta_s \delta_p+\Omega^2}{4}\right)^L
\Biggl[
T_L\!\left(\frac{E(\delta_s+\delta_p)+\sqrt{E^2(\delta_s+\delta_p)^2-4(E^2-\Omega^2)(\delta_s \delta_p+\Omega^2)}}{2(\delta_s \delta_p+\Omega^2)}\right) \notag \\
&\qquad\quad\quad\quad +\,
T_L\!\left(\frac{E(\delta_s+\delta_p)-\sqrt{E^2(\delta_s+\delta_p)^2-4(E^2-\Omega^2)(\delta_s \delta_p+\Omega^2)}}{2(\delta_s \delta_p+\Omega^2)}\right)
\Biggr] \notag \\
\sim& \left(\tfrac{\delta_s \delta_p+\Omega^2}{4}\right)^L 
\max \{ w_{\pm}^L, w_{\pm}^{-L} \}, \label{eq:VR}
\end{align}
where
\begin{equation}
w_{\pm} = \left| z_{\pm}+\sqrt{z_{\pm}^2-1} \right|,
\qquad
z_{\pm} = \frac{E(\delta_s+\delta_p) \pm \sqrt{E^2(\delta_s+\delta_p)^2-4(E^2-\Omega^2)(\delta_s \delta_p+\Omega^2)}}{2(\delta_s \delta_p+\Omega^2)}.
\end{equation}
Finally,
\begin{equation}
C_R^{(L)}(E) \sim \left(\tfrac{J_p \delta_s}{2}\right)^L.
\end{equation}
Where the approximate equality holds at small $J_p$ for the second equation, $w_{ \pm}=\left|z_{ \pm}+\sqrt{z_{ \pm}^2-1}\right|, z_{ \pm}=\frac{E\left(\delta_s+\delta_p\right) \pm \sqrt{E^2\left(\delta_s+\delta_p\right)^2-4\left(E^2-\Omega^2\right)\left(\delta_s \delta_p+\Omega^2\right)}}{2\left(\delta_s \delta_p+\Omega^2\right)}$. Coefficients of order $o\left(e^L\right)$ have been ignored at thermodynamic limit. Note the second equation is also exact at $E=0$ regardless of the value of $J_p$, since if $E=0$, the leftover terms can be picked only among $e^{i \phi}$ terms and $e^{i k}$ terms. Then take CAs with an odd $L$, for example take only odd terms in the Fibonacci sequence. Now $e^{i \phi}$ terms and $e^{i k}$ terms must contribute a total of zero phase altogether while there is an odd number of phase factors to pick, giving a contradiction, leading to no $e^{i \varphi}$ terms. Instead we must consider $e^{2 i \varphi}$ term, given as $V_{2 R}^{(L)}(E) \sim\left(\frac{\delta_s \delta_p+\Omega^2}{4}\right)^L$, which is exactly $\lim _{E \rightarrow 0} V_R^{(L)}(E)$. 
Consequently, the phase structure and phase transitions of the model are determined by this limiting behavior. At \( E = 0 \), one can rigorously compare the relative magnitudes of the three terms, yielding
\begin{equation}
\frac{C_R^{(L)}(E=0)}{t_R^{(L)}(E=0)} = 1, \qquad
\frac{C_R^{(L)}(E=0)}{V_R^{(L)}(E=0)} \sim 
\left(\frac{2 J_p \delta_s}{\delta_s \delta_p + \Omega^2}\right)^L.
\end{equation}
This result implies that critical states appear strictly at zero energy whenever the condition
$
\Omega^2 + \delta_s \delta_p \le 2 \delta_s J_p
$
is satisfied, which holds under the experimental parameters considered here.

\begin{figure}[htbp]
\centering
\includegraphics[width=\textwidth]{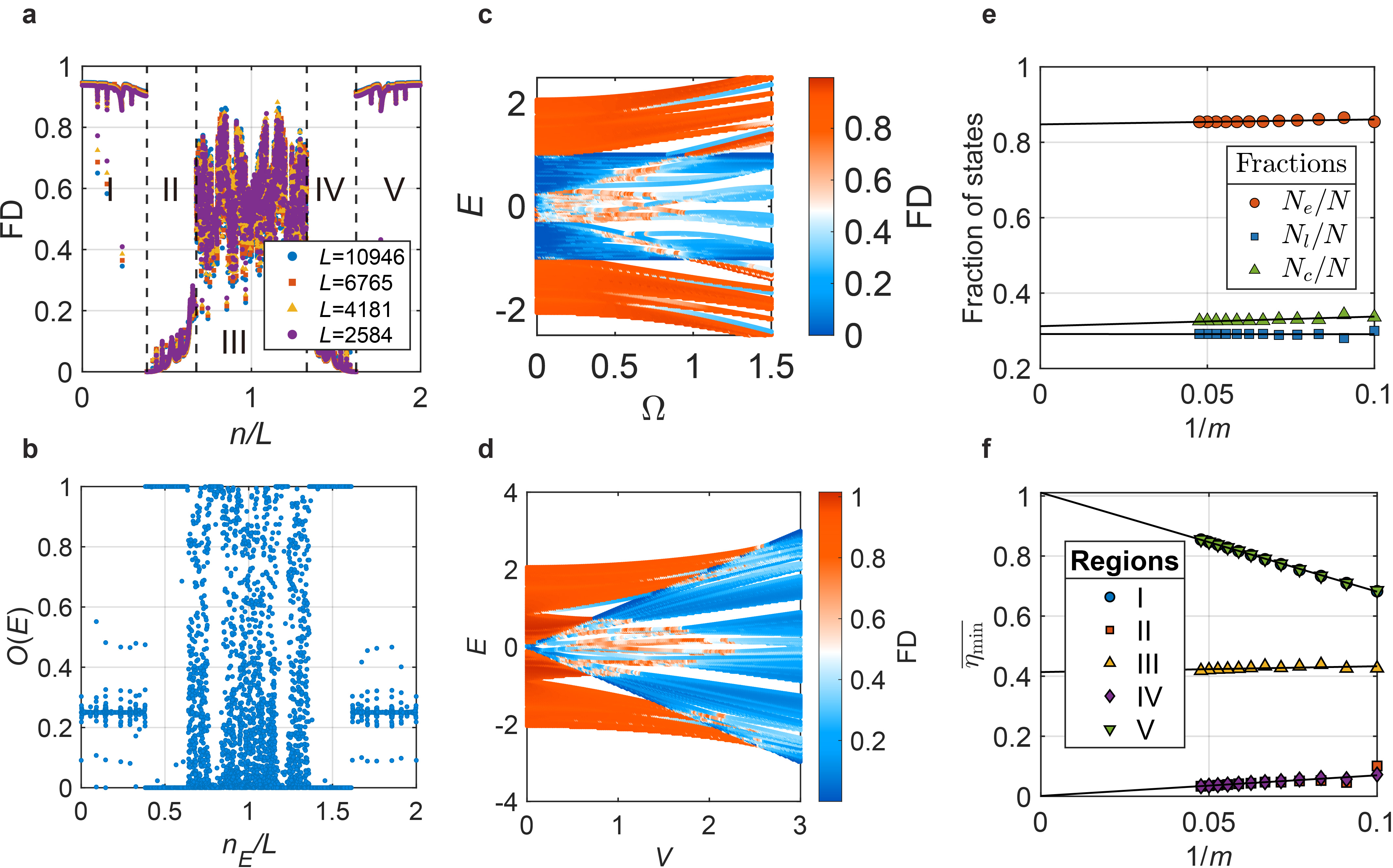}
\caption{Numerical results of the orbital quasiperiodic lattice model. \textbf{a}, Fractal dimension (FD) versus state index $n$ for different system sizes $L$, with parameters $V=J_p,\, \Omega=0.5J_p,\,\Delta=0,\, \alpha=(\sqrt{5}-1)/2$. Five distinct regions can be identified. \textbf{b}, Quarter-lattice occupation number $O(E)$\cite{Wang2022PRB} versus state index $n_E$ for $L=6765$, corresponding to the regions in (\textbf{a}). \textbf{c},\textbf{d}, FD as a function of coupling $\Omega$ and quasiperiodic strength $V$. The light red region represents the critical regime. \textbf{e},\textbf{f}, Finite-size scaling: the fractions of three types of states (\textbf{e}) and the scaling of $\overline{\eta_{\min}}$ (\textbf{f}) for various Fibonacci system sizes $L=F_m$.}
\label{fig:Supple_Numerical}
\end{figure}

\subsection{Numerical results}

To further validate the existence of the tripartite quantum phase, we perform comprehensive numerical simulations on the effective Hamiltonian $\hat{H}_{\mathrm{eff}}$ defined in Eq.~(8) (main text). Our investigation focuses on characterizing the localization properties of the entire energy spectrum through exact diagonalization (ED) and finite-size scaling analysis. By computing a suite of complementary diagnostics—including the fractal dimension (FD), the quarter-lattice occupation number $O(E)$, and the averaged minimal scaling exponent $\overline{\eta_{\min}}$—we provide a multi-dimensional perspective on the distinct transport and spatial properties of the localized, extended, and critical states.

Specifically, we first evaluate the FD for each eigenstate $|\psi_n\rangle$, defined as:
\begin{equation}
\mathrm{FD}=-\frac{\ln\left(\sum_j |u_{j,n}|^4+|v_{j,n}|^4\right)}{\ln L},
\end{equation}
where $u_{j,n}$ and $v_{j,n}$ are the amplitudes of the $n$-th eigenstate on the $s$ and $p$ orbitals at site $j$, respectively. The FD serves as a sensitive probe for the spatial structure of the wave function: in the thermodynamic limit, $\mathrm{FD} \to 1$ for delocalized (extended) states, $\mathrm{FD} \to 0$ for localized states, while a fractional value $0 < \mathrm{FD} < 1$ signifies the self-similar, multi-fractal nature characteristic of critical states. Additionally, we calculate the quarter-lattice occupation number $O(E) = \sum_{j=1}^{L/4} \langle \psi_E | \hat{n}_{j,s} + \hat{n}_{j,p} | \psi_E \rangle$ to assess the spatial distribution of the particle density across the lattice. For delocalized states, the density is expected to be uniformly distributed, yielding $O(E) \approx 1/4$, whereas localized states tend to cluster in specific regions, leading to $O(E)$ values near $0$ or $1$, depending on the localization center~\cite{Wang2022PRB}.

The numerical results presented in Fig.~\ref{fig:Supple_Numerical}a,b reveal a clear tripartite separation within the energy spectrum. Based on the calculated FD and $O(E)$, we identify five distinct spectral regions. In regions I and V, the eigenstates exhibit $\mathrm{FD} \approx 1$ and $O(E) \approx 1/4$, which are the hallmark signatures of extended states. Conversely, regions II and IV are characterized by $\mathrm{FD} \approx 0$ and $O(E)$ values approaching the boundaries of $0$ or $1$, indicating strong localization. Most importantly, region III displays intermediate FD values and significant fluctuations in $O(E)$, providing robust evidence for the emergence of critical states. This tripartite structure is further explored in Fig.~\ref{fig:Supple_Numerical}c,d, where we map the FD as a function of the interorbital coupling $\Omega$ and the quasiperiodic strength $V$. The resulting phase diagrams clearly delineate the boundaries of the critical regime, highlighting how the interplay between orbital hybridization and quasiperiodicity reshapes the localization landscape.

Finally, to ensure that these observations persist in the thermodynamic limit, we perform a rigorous finite-size scaling analysis. We select system sizes $L=F_m$ corresponding to the Fibonacci sequence to maintain the self-similarity of the quasiperiodic potential. As shown in Fig.~\ref{fig:Supple_Numerical}e,f, we track the fraction of each type of state and the averaged minimal exponent $\overline{\eta_{\min}}$, defined as:
\begin{equation}\overline{\eta_{\min }}=\frac{1}{N_s} \sum_{\text {states in region}} \eta_{\min}, \quad \text{with} \quad\eta_{\min }=\min_j\left\{-\frac{\ln (n_{j,s} + n_{j,p})}{\ln L}\right\}.
\end{equation}
The scaling exponent $\eta_{\min}$ is a powerful tool for distinguishing localization properties: as $L \to \infty$, it converges to $1$ for extended states, $0$ for localized states, and a value between $0$ and $1$ for critical states. Our scaling results demonstrate that $\overline{\eta_{\min}}$ for each identified region converges toward these distinct asymptotic limits, confirming that the tripartite quantum phase is a robust feature of the system rather than a finite-size effect~\cite{Wang2020PRL_ME,Xiaopeng2016,Zhou2023PRL}.

\begin{figure}[htbp]
\centering
\includegraphics[width=0.6\textwidth]{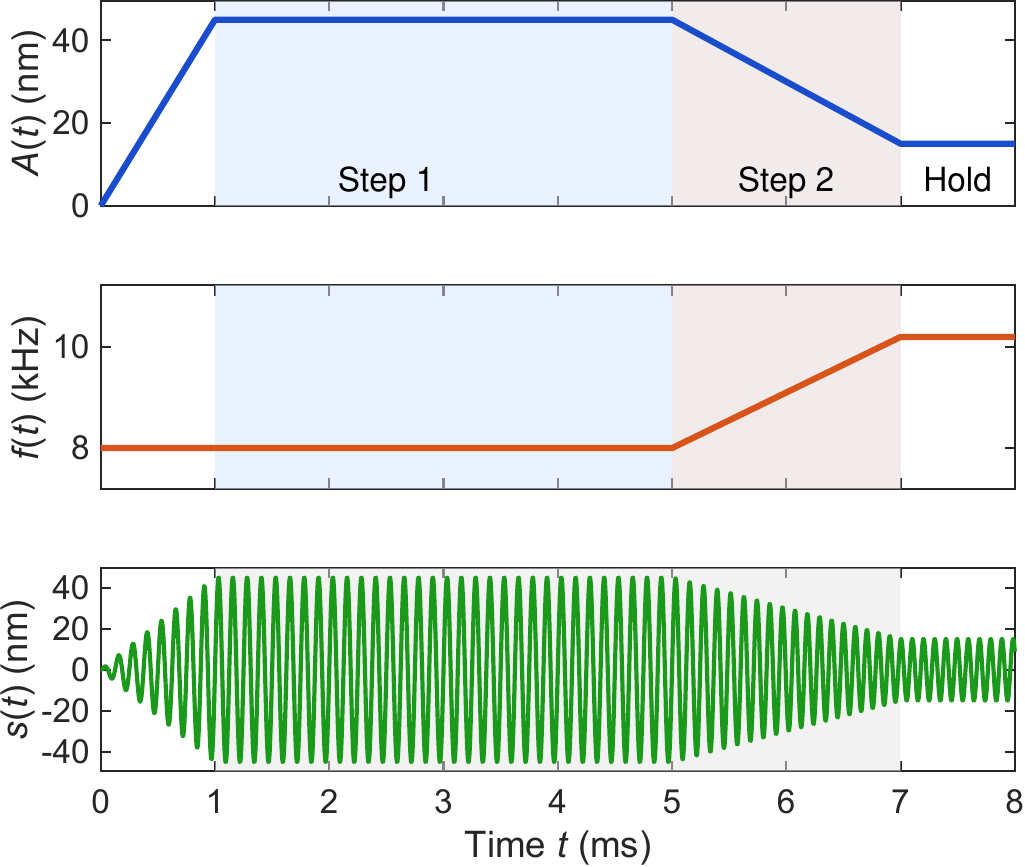}
\caption{Timing sequence of lattice shaking. \textbf{a}–\textbf{c}, Time dependence of the shaking amplitude $A(t)$, frequency $f(t)$, and displacement $s(t)$. As an example, we show the case with initial shaking frequency $f_i=\SI{8}{kHz}$ and amplitude $A_i=\SI{45}{nm}$.}
\label{fig:TimeSequence}
\end{figure}

\section{Calibration of the effective coupling strength}
\begin{figure}[htbp]
\centering
\includegraphics[width=\textwidth]{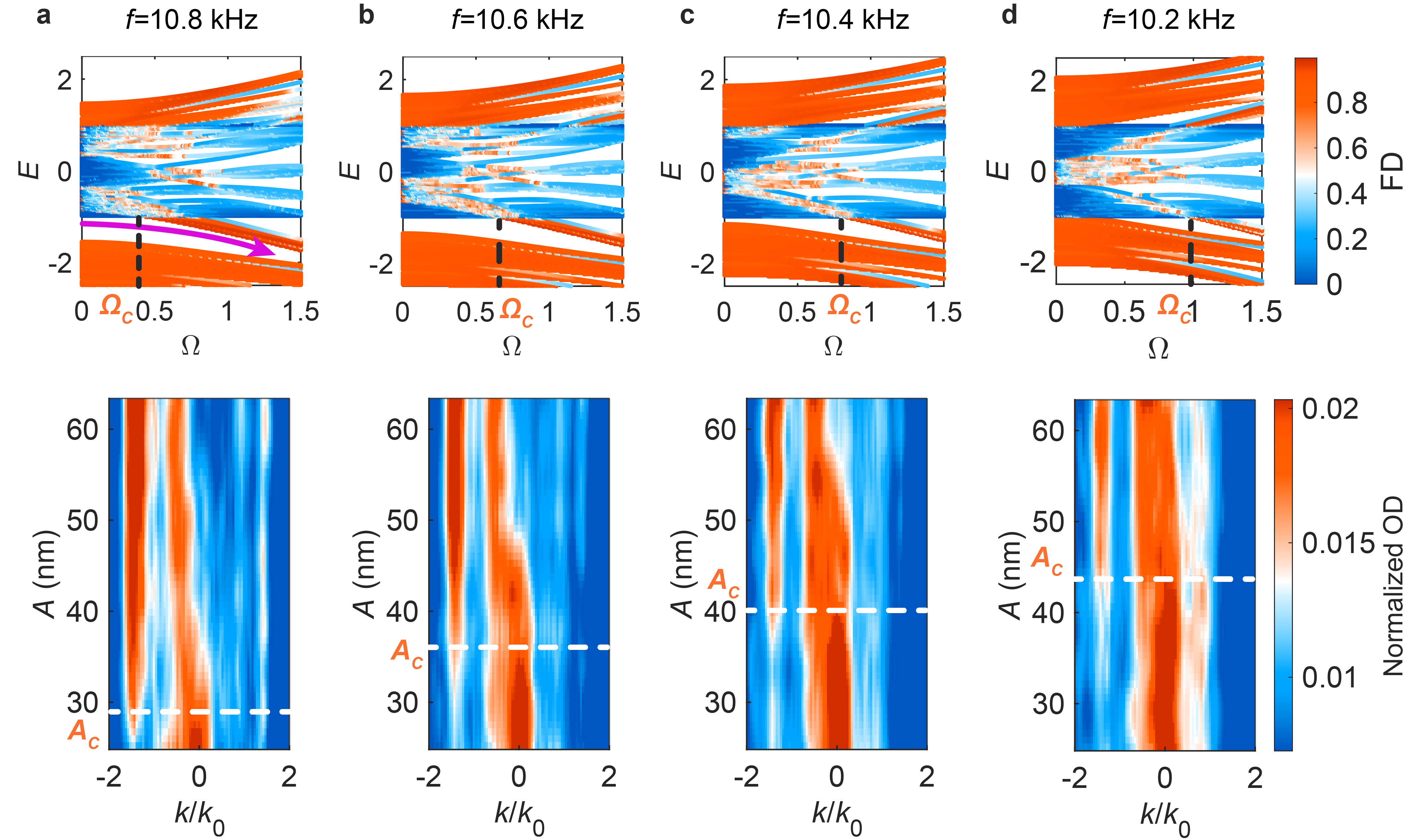}
\caption{Calibration of the effective coupling strength. The shaking frequency is held fixed while the amplitude is linearly ramped from 0 to $A$ within \SI{5}{ms}, followed by time-of-flight (TOF) measurement of the momentum distribution. As the final amplitude $A$ increases, the momentum distribution changes from a localized state to a two-peak distribution. This process corresponds to a Landau–Zener transition in the spectrum, as indicated by the arrow in (\textbf{a}). When the coupling strength exceeds the critical value $\Omega_c$, the state evolves into an extended state with four momentum peaks in the first two Brillouin zones. In the experiment, only two peaks are observed due to Floquet micromotion. Measurements performed at different shaking frequencies yield the critical amplitudes $A_c$, which quantitatively map the experimental shaking amplitude to the effective coupling strength $\Omega$ of the tight-binding model.}
\label{fig:Supple_Omega}
\end{figure}
% In the experiment, we prepare the critical states in the spectrum using a two-step Landau–Zener transition. A fully adiabatic ramp only allows access to the ground state; therefore, nonadiabatic evolution is required to populate excited states.
We experimentally find that at sufficiently large coupling strengths the atoms can be prepared in an excited extended state characterized by four momentum peaks, which we denote as the $E_2$ state. This state corresponds to an excited eigenstate of the effective Hamiltonian, as indicated in Fig.~\ref{fig:Supple_SchematicOfPreparation}a. The $E_2$ state can be prepared either by ramping the shaking amplitude or shaking frequency. Taking ramping the shaking amplitude as an example, in practice we linearly ramp the shaking amplitude from 0 to $A$ within \SI{5}{ms} with shaking frequency fixed, and subsequently perform TOF measurements to obtain the momentum distribution, as indicated in the purple arrow in Fig.~\ref{fig:Supple_Omega}a. The measured momentum distributions as a function of the final amplitude $A$ are shown in the lower row of Fig.~\ref{fig:Supple_Omega}. As $A$ increases, the momentum distribution evolves from the initially localized state to a two-peak distribution. The observed asymmetry, with peaks appearing only on the left-hand side, arises from Floquet micromotion\cite{Eckardt2017Colloquium,Bukov2015high-frequency,Goldman2014PRX,Songbo2022,Sun_2023_micromotion}: the peak positions alternate with the driving phase, and we choose the evolution time such that atoms predominantly condense on the left side. This behavior is consistent with the spectrum obtained from the effective Hamiltonian calculation (upper row of Fig.~\ref{fig:Supple_Omega}), where the initial atoms evolve from the $s$-band ground state to an extended state with increasing coupling strength. Beyond a critical coupling $\Omega_c$, the atoms transfer into this excited extended branch. Moreover, $\Omega_c$ depends on the shaking frequency $f$: as $f$ decreases, $\Omega_c$ shifts to higher values, in agreement with experimental observations. Fig.~\ref{fig:Supple_Omega}a-d show the evolution at different shaking frequencies, demonstrating that the critical shaking amplitude $A_c$ at which the transition occurs systematically increases with decreasing frequency. The model parameters are fixed as $V=J_p=1$ and $\alpha=1.411$, and all energies are expressed in units of $J_p$. Thus, we can build a bridge from our experimental parameters to the tight-binding parameters of the effective Hamiltonian.

\section{State Preparation}
\begin{figure}[htbp]
\centering
\includegraphics[width=\textwidth]{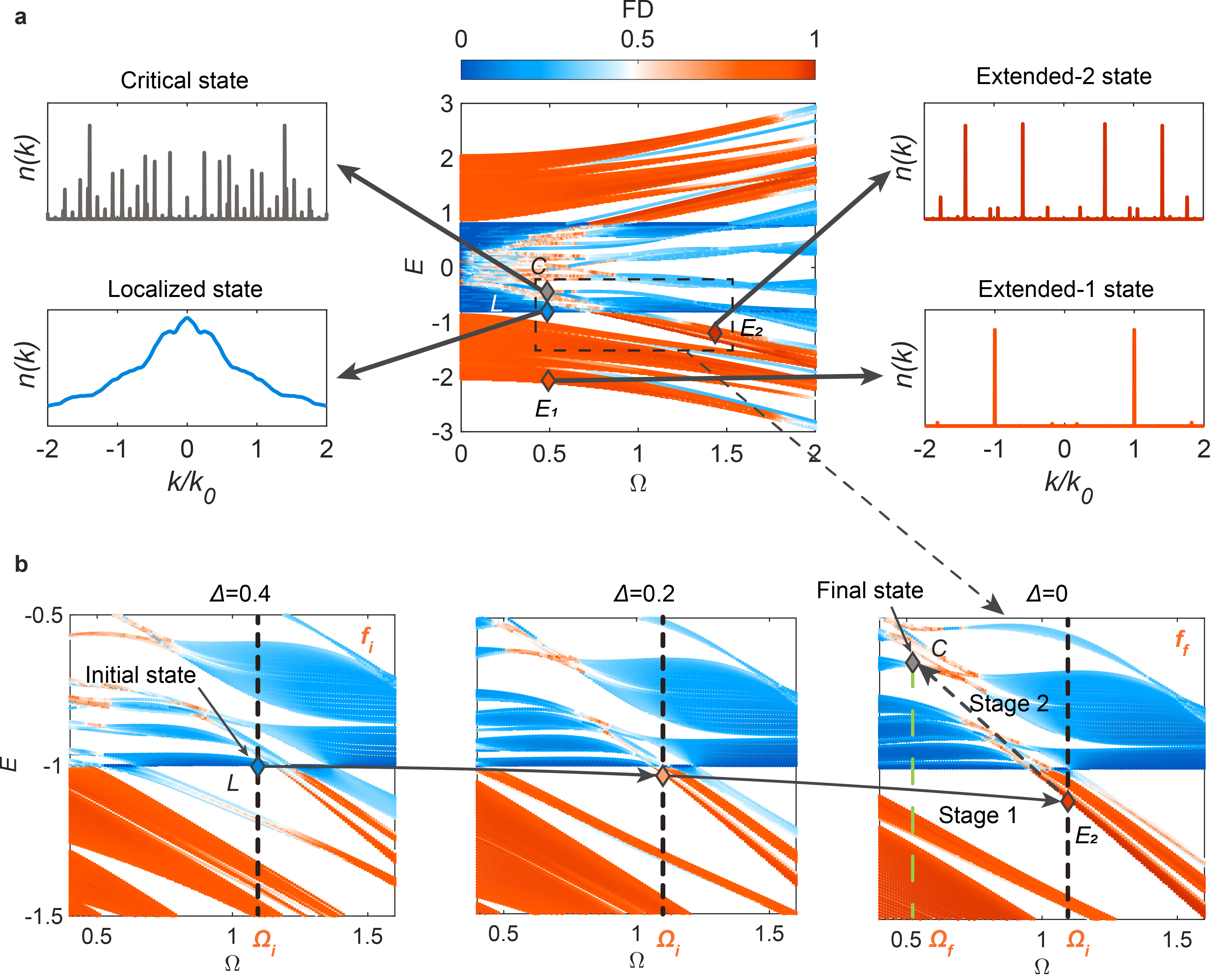}
\caption{Schematics of the two-stage preparation protocol. \textbf{a}, Theoretical momentum distributions of typical eigenstates of Hamiltonian Eq.~(8) (main text). The central panel shows the fractal dimension as a function of the coupling strength, calculated for $\Delta=0$, $V=1$, and $\alpha=1.411$. The four panels on the sides display the momentum distributions corresponding to four representative states in the spectrum. \textbf{b}, Experimental schematic of the two-stage preparation protocol for generating the critical state. From left to right, the panels correspond to fractal-dimension spectra at detunings $\Delta=0.4$, $\Delta=0.2$, and $\Delta=0$, which reflect the stage-1 frequency ramp from $f_i$ to $f_f$. In this process, the initial localized state evolves into the $E_2$ state. In stage 2, a rapid ramp-down of the shaking amplitude drives the state into the critical regime by reducing the coupling strength from $\Omega_i$ to $\Omega_f$, thereby realizing the critical state.}

\label{fig:Supple_SchematicOfPreparation}
\end{figure}

This section provides additional details on the state preparation procedure. The general preparation scheme and the underlying physical picture have been described in the main text. The shaking pulse sequence used in the two-stage protocol is presented in the Methods section and illustrated in Fig.~\ref{fig:TimeSequence}. Here we focus on how different choices of the initial shaking parameters $f_i$ and $A_i$ lead to distinct final states, and provide representative TOF images characterizing each state.

Among the eigenstates of Hamiltonian Eq.~(8) (main text), the fractal dimensions and momentum distributions of four representative states are shown in Fig.~\ref{fig:Supple_SchematicOfPreparation}a. By dynamically ramping the Hamiltonian parameters, we are able to prepare all of these states. The extended ground state ($E_1$) is prepared by sweeping the shaking frequency from an initial $f_i$ to the final frequency. When the coupling is sufficiently strong, the atoms are transferred adiabatically into the bottom of the $p$-band, corresponding to the $E_1$ state.

In contrast, preparation of the critical state requires a two-step process mediated by the excited extended state ($E_2$). As established in the calibration measurements, the $E_2$ state can be prepared either by ramping the shaking amplitude or by ramping the frequency. In the latter case, shown in Fig.~\ref{fig:Supple_SchematicOfPreparation}b, increasing the frequency from $f_i$ to $f_f$ shifts the $E_2$ branch downward in the spectrum, enabling adiabatic transfer of atoms into the $E_2$ state. The critical state lies in the overlap region between this extended branch and the localized $s$-band states. A rapid reduction of the shaking amplitude therefore drives the system nonadiabatically into this overlap regime, realizing the critical state via a Landau--Zener transition, as illustrated in Fig.~\ref{fig:Supple_SchematicOfPreparation}b.

To establish a unified preparation scheme for all three types of states, we employ a two-stage protocol: in stage 1 the shaking frequency is ramped, and in stage 2 the shaking amplitude is rapidly reduced. Specifically, adiabatically ramping the frequency in stage 1 brings the $s$- and $p$-bands into resonance, transferring atoms from the localized $s$-band ground state into the lowest-energy $p$-band state ($E_1$). The outcome of stage 1 depends on the choice of initial frequency $f_i$: sweeping through the $p$-band minimum results in the $E_1$ state, while bypassing the minimum leads to the $E_2$ state. In stage 2, the rapid ramp-down of the amplitude leaves the $E_1$ state unchanged, but drives the $E_2$ state into the critical state.

\begin{figure}[htbp]
\centering
\includegraphics[width=\textwidth]{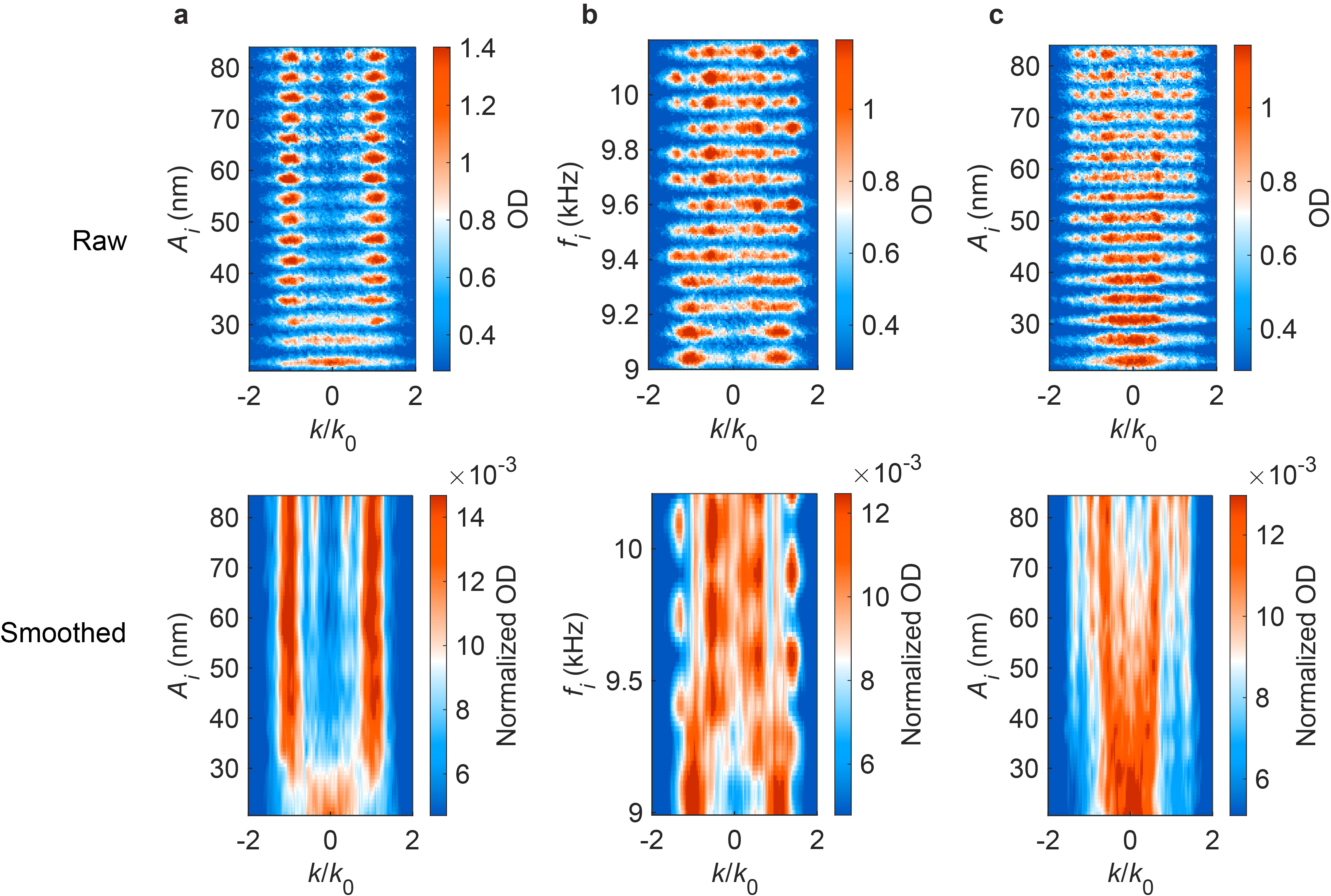}
\caption{Different final states accessed via the two-stage protocol. \textbf{a}, For $f_i = \SI{9}{kHz}$, increasing the initial shaking amplitude $A_i$ drives the final state from localized to extended. \textbf{b}, With fixed $A_i = \SI{63}{nm}$, tuning $f_i$ drives the final state from extended to critical. \textbf{c}, For $f_i = \SI{10}{kHz}$, varying $A_i$ drives the system from localized to critical. Upper rows show raw TOF images; lower rows show the corresponding smoothed profiles used throughout the main text figures.}
\label{fig:Supple_FinalState}
\end{figure}

By tuning $f_i$ and $A_i$, we obtain distinct final states. For small $A_i$, atoms remain in the localized state. For larger $A_i$, the final state evolves into either an extended or a critical state depending on $f_i$. Specifically, with $f_i = \SI{9}{kHz}$, increasing $A_i$ drives a transition from the localized to the extended state [Fig.~\ref{fig:Supple_FinalState}a], whereas with $f_i = \SI{10}{kHz}$ the same variation of $A_i$ drives the system from localized to critical [Fig.~\ref{fig:Supple_FinalState}c]. Fixing $A_i = \SI{63}{nm}$ and varying $f_i$ instead takes the system from extended to critical [Fig.~\ref{fig:Supple_FinalState}b]. These three cases correspond to the three distinct preparation paths illustrated in Fig.~3 of the main text.

Figure~\ref{fig:Supple_FinalState} presents both the raw TOF images (upper row) and the smoothed profiles (lower row); the latter are used in all other figures of the manuscript. The raw OD images are obtained directly from absorption imaging. For each data point, we integrate the OD along the $y$ direction to obtain the one-dimensional momentum distribution $n(k_x)$. This distribution is then normalized within the first two Brillouin zones as
\begin{equation}
n_\mathrm{norm}(k_x) = \frac{n(k_x)}{\int \mathrm{d}k_x\, n(k_x)}.
\end{equation}
The normalized distributions $n_\mathrm{norm}(k_x)$ from different parameter points are subsequently concatenated and interpolated smoothly between parameter values, yielding the smoothed profiles shown in the lower rows.

\section{Numerical Simulation}
\begin{figure}[htbp]
  \centering
    \includegraphics[width=\textwidth]{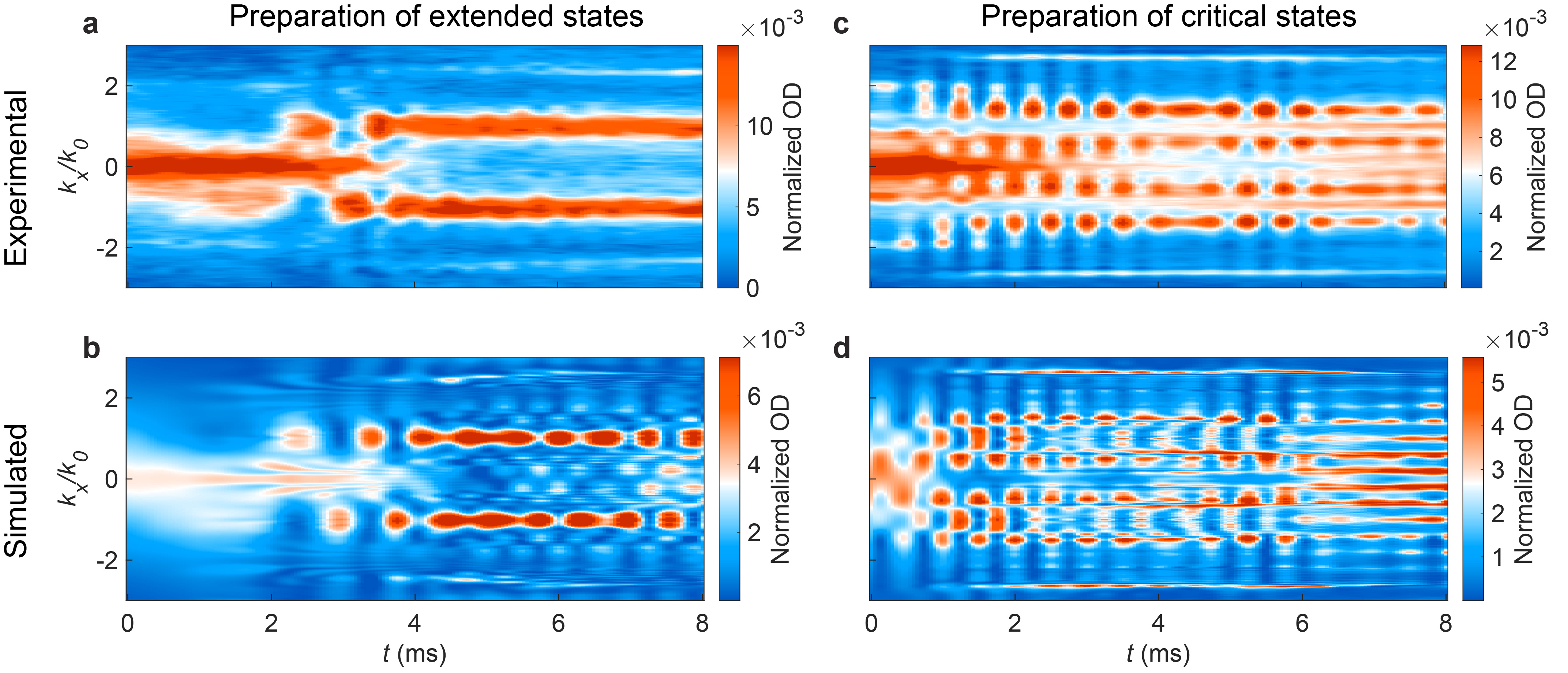}
  \caption{Experimental and simulated time slices of preparation for different states. \textbf{a,b} show the time evolution of the momentum distribution of preparation of the extended state. From \(0\) to \SI{5}{\milli\second}, the system is prepared from the localized state to the extended ground state $E_1$, then from \(5\) to \SI{7}{\milli\second}, the coupling strength is decreased while it remains in the extended ground state, which then also stays stable with a constant Hamiltonian for the last \SI{1}{ms} holding. \textbf{c,d} show the time evolution of the momentum distribution of the critical state. From \(0\) to \SI{5}{\milli\second}, the system is prepared from a localized state to the $E_2$ state, then from \(5\) to \SI{7}{\milli\second} it is prepared into the critical state, which remains stable for the last \SI{1}{ms} with a constant Hamiltonian. Here, \textbf{a},\textbf{c} show the experimental results, while \textbf{b},\textbf{d} present the simulated results, which are in good agreement. The measurements are taken with a time step of \SI{0.25}{ms} for both experimental and simulated results.}
  \label{fig:both}
\end{figure}
We present a theoretical framework to simulate the preparation of distinct quantum states by numerically solving the spatially discretized Schrödinger equation for atoms in a one-dimensional shaken optical lattice. To simplify the system, we model it in one dimension.
\begin{equation}
i\hbar\,\frac{\partial \psi(\mathbf{x},t)}{\partial t} =
e^{-i\gamma}\!\left[-\,\frac{\hbar^2}{2m}\nabla^2 + V_{\text{tot}}(\mathbf{x},t)\right]\psi(\mathbf{x},t),
\end{equation}
where \(\psi(\mathbf{x},t)\) is the condensate wavefunction, \(\hbar\) the reduced Planck constant, \(m\) the atomic mass, and \(\nabla^2\) the Laplacian. Dissipation is phenomenologically incorporated via the factor \(e^{-i\gamma}\)\cite{Anderson2017PRL}. The total potential \(V_{\text{tot}}(\mathbf{x},t)=V_{\mathrm{DT}}+V_{\mathrm{p}}+V_{\mathrm{q}}\) comprises: (i) a harmonic confinement \(V_{\mathrm{DT}}=\tfrac{1}{2}\omega^2 x^2\); (ii) a primary lattice \(V_{\mathrm{p}}(x)=V_{\mathrm{p}}\sin^2(k_{\mathrm{p}}x)\) of wavelength \(\lambda_{\mathrm{p}}=\SI{1064}{\nano\meter}\); and (iii) a quasiperiodic lattice \(V_{\mathrm{q}}(x)=V_{\mathrm{q}}\sin^2(k_{\mathrm{q}}x)\) of wavelength \(\lambda_{\mathrm{q}}=\SI{755}{\nano\meter}\). For completeness, we set \(k_{\alpha}=2\pi/\lambda_{\alpha}\) and \(E_{\alpha}=\hbar^2 k_{\alpha}^2/(2m)\) for \(\alpha\in\{\mathrm{p},\mathrm{q}\}\).

Although the Gross–Pitaevskii equation is commonly employed to model the dynamics of 
interacting Bose–Einstein condensates, we instead adopt the single-particle 
Schrödinger equation in the present study. 
This simplification is justified by the weak transverse confinement of the optical 
potential, which gives rise to a broad radial distribution and therefore a 
significantly reduced three-dimensional atomic density, rendering the mean-field 
interaction energy negligible compared with the total system energy. 
For our parameters, the atomic cloud prior to the shaking process exhibits a 
radial size of approximately \SI{25}{\micro\metre} and spans roughly 30–40 lattice sites 
along the $x$ direction, leading to a dilute spatial density that suppresses 
interaction-induced nonlinearities. 
The mean-field interaction energy can be estimated as 
$E_{\mathrm{int}} = U N_{\mathrm{ave}}$, where $N_{\mathrm{ave}}$ is the average atom number 
per lattice site and 
$U = \frac{4\pi \hbar^{2} a_s}{m}\int |\psi(\mathbf r)|^{4}\,d^{3}\mathbf r$. 
By approximating $\psi(\mathbf r)$ with a noninteracting Gaussian wavefunction of the 
measured transverse width, we find that the mean-field contribution amounts to less 
than \SI{1}{\percent} of the total energy. 
Consequently, the interaction effect remains perturbative throughout the dynamics, 
and the essential physics can be captured within a non-interacting 
single-particle framework.

The initial state at \(t=0\) is obtained by solving the stationary Schrödinger equation with parameters matched to the experiment: \(\omega=\SI{20}{\hertz}\), \(V_{\mathrm{p}}=8.7E_{\mathrm{p}}\), and \(V_{\mathrm{q}}=1.6E_{\mathrm{q}}\). We simulate the subsequent momentum evolution using the experimental driving sequence, represented by a time-dependent displacement of the quasiperiodic lattice. During the first \SI{1}{\milli\second}, the extended and critical protocols are initiated at driving frequencies of \SI{8}{\kilo\hertz} and \SI{10}{\kilo\hertz}, respectively, while the shaking amplitude is ramped from zero to \(\SI{63}{\nano\meter}\). From \SI{1}{\milli\second} to \SI{5}{\milli\second}, the amplitude is kept constant as the frequency is linearly increased to \SI{10.8}{\kilo\hertz}, steering the system along the two protocols into the intermediate \(E_1\) and \(E_2\) manifolds. Between \SI{5}{\milli\second} and \SI{7}{\milli\second}, the frequency remains fixed while the amplitude is reduced to \(\SI{21}{\nano\meter}\). In the final \SI{1}{\milli\second}, both the drive frequency and amplitude are held constant, rendering the Hamiltonian time-independent. The simulated momentum distributions for these preparation sequences, presented in Fig.~\ref{fig:both}, show good overall agreement with the experimental data, indicating that the main features of the observed dynamics are reproduced by the simulations.

For clarity of presentation, we select a representative set of parameters that most transparently reveals the underlying physical behavior.  
Although the values of \(V_p\) and the shaking frequency used in the numerical simulations differ slightly from those used in the experiment, both experimental observations and theoretical analyses demonstrate that the global phase diagram and the characteristic time-slice dynamics remain essentially unchanged over a broad parameter range. This robustness ensures that the minor parameter mismatch does not affect the physical interpretation of the results.  
Such discrepancies between experiment and simulation arise from slow optical-path drifts, slight miscalibrations of the lattice depth, limited accuracy of the phenomenological dissipation term, or other uncontrolled experimental factors. These effects, though unavoidable in realistic conditions, are secondary in nature and do not alter the overall agreement between theory and experiment.

\section{Extraction and Analysis of Characteristic Length Scales}
\label{sec:appendix_length_scales}

To quantitatively differentiate the localized, extended, and critical states prepared via our two-stage protocol, we employ a dual-space perspective utilizing two complementary characteristic length scales: the spatial coherence length $L_{\mathrm{Coh}}$ and the momentum-space correlation length $k_{\mathrm{Cor}}$. Because a critical state exhibits a multifractal wave function that evades standard localization in both real and reciprocal domains, relying on a single observable is insufficient. By simultaneously mapping the behavior of both $L_{\mathrm{Coh}}$ and $k_{\mathrm{Cor}}$, we establish a robust framework to resolve the tripartite phase diagram.

The first observable, $L_{\mathrm{Coh}}$, quantifies the characteristic spatial correlations and the degree of localization within the atomic ensemble. Invoking the Wiener-Khinchin theorem, the momentum distribution $n(k)$ is fundamentally linked to the first-order spatial correlation function $G\left(x^{\prime}, x+x^{\prime}\right)=\langle\hat{\Psi}^{\dagger}\left(x^{\prime}\right) \hat{\Psi}\left(x+x^{\prime}\right)\rangle$. Specifically, the power spectrum of the field (the momentum density) is the Fourier transform of its autocorrelation, expressed through the relation \cite{Deissler2010NP}:
\begin{equation}
	n(k) \propto \mathfrak{F}^{-1} \left[ \int G\left(x^{\prime}, x+x^{\prime}\right) \mathrm{d} x^{\prime} \right],
\end{equation}
where $\mathfrak{F}$ denotes the Fourier transform operator. In our analysis, we utilize the inverse relation to extract the spatial information from the experimental observables. We define $S(x) = \mathfrak{F}[n(k)]$, which represents the ensemble-averaged spatial correlation. The spatial coherence length is then defined as the normalized first moment of the absolute correlation function:
\begin{equation}
	L_{\mathrm{Coh}}=\frac{\int_0^{x_\mathrm{max}} \mathrm{d}x \, |S(x) x|}{\int_0^{x_\mathrm{max}} \mathrm{d}x \, |S(x)|}.
\end{equation}
Physically, $L_{\mathrm{Coh}}$ reflects the competition between the intrinsic localization length of the wave functions and the many-body coherence of the system. In the experiment, the extracted absolute values of $L_{\mathrm{Coh}}$ are inherently smaller than those of an ideal pure eigenstate. This reduction is primarily driven by thermalization and decoherence effects, as the macroscopic atomic gas occupies a statistical manifold of multiple states. Nevertheless, the relative distinction between the localized and extended phases is dominantly governed by the intrinsic spatial extent of the underlying constituent wave functions. Therefore, $L_{\mathrm{Coh}}$ effectively captures the localization-to-delocalization transitions, enabling us to differentiate the quantum states despite the many-body dephasing.

In practice, the extraction of $L_{\mathrm{Coh}}$ involves a specific data processing pipeline to isolate the lattice physics from external confinement effects. First, the raw two-dimensional Time-of-Flight (TOF) images are integrated along the transverse direction to yield the 1D momentum density $n(k)$. After computing $S(x)$, we evaluate the integral over a truncated domain bounded by $x_{\mathrm{max}} = 5d$ (where $d = \SI{532}{nm}$ is the primary lattice constant). This choice of $x_{\mathrm{max}}$ is crucial: it is significantly smaller than the typical characteristic length of the global harmonic confinement ($\approx \SI{2.4}{\micro\metre}$), ensuring that the captured signal is not dominated by the external trap geometry. To further align with the discrete symmetry of the lattice, we retain only the correlation amplitudes at integer lattice sites $x_j = jd$ ($j \in \{0, 1, \dots, 5\}$), resulting in a discrete summation for the final calculation of $L_{\mathrm{Coh}}$.

Complementary to the real-space analysis, the density-density correlation length $k_{\mathrm{Cor}}$ is introduced to quantify the characteristic spread and fragmentation of the quantum states in reciprocal space. We first calculate the momentum-space autocorrelation function $C(k)$, defined as:
\begin{equation}
	C(k) = \int_{-2k_0}^{2k_0} n(k'+k) n(k') \, \mathrm{d}k',
\end{equation}
where $k_0$ represents the characteristic momentum scale of the lattice. The correlation length is subsequently extracted through the normalized first moment of $C(k)$:
\begin{equation}
	k_{\mathrm{Cor}} = \frac{\int_0^{k_0} |C(k)k| \, \mathrm{d}k}{\int_0^{k_0} |C(k)| \, \mathrm{d}k}.
\end{equation}
This observable provides a sensitive probe for the underlying momentum structure. An extended state, characterized by well-defined and sharp Bragg peaks, yields a highly localized autocorrelation profile $C(k)$, thereby minimizing $k_{\mathrm{Cor}}$. In contrast, both localized and critical states lack long-range spatial periodicity, resulting in broad, highly fragmented momentum profiles. This reciprocal-space spreading drives a substantially larger $k_{\mathrm{Cor}}$, establishing it as an essential metric for distinguishing fully extended states from the rest of the phase diagram.

By synthesizing the behaviors of $L_{\mathrm{Coh}}$ and $k_{\mathrm{Cor}}$, we identify Region I as the localized phase (suppressed $L_{\mathrm{Coh}}$, large $k_{\mathrm{Cor}}$), Region II as the extended phase (large $L_{\mathrm{Coh}}$, minimized $k_{\mathrm{Cor}}$), and Region III as the critical phase, which features a unique "dual-delocalization" signature (simultaneously large $L_{\mathrm{Coh}}$ and $k_{\mathrm{Cor}}$). To rigorously validate these experimental extractions, we benchmark our results against numerical simulations as detailed in the preceding section. By adopting parameters that directly match the experimental conditions, we calculate the theoretical final momentum distributions. Crucially, both the theoretical and experimental momentum density profiles are subjected to the exact same data processing pipeline—including the identical spatial truncation and discrete integration methodology—to extract the characteristic length scales. This systematic comparison demonstrates good agreement, confirming that the dual-space characteristic functions reliably map the tripartite phase diagram.

\section{Micromotion of the States}
\begin{figure}[htbp]
  \centering
    \includegraphics[width=\textwidth]{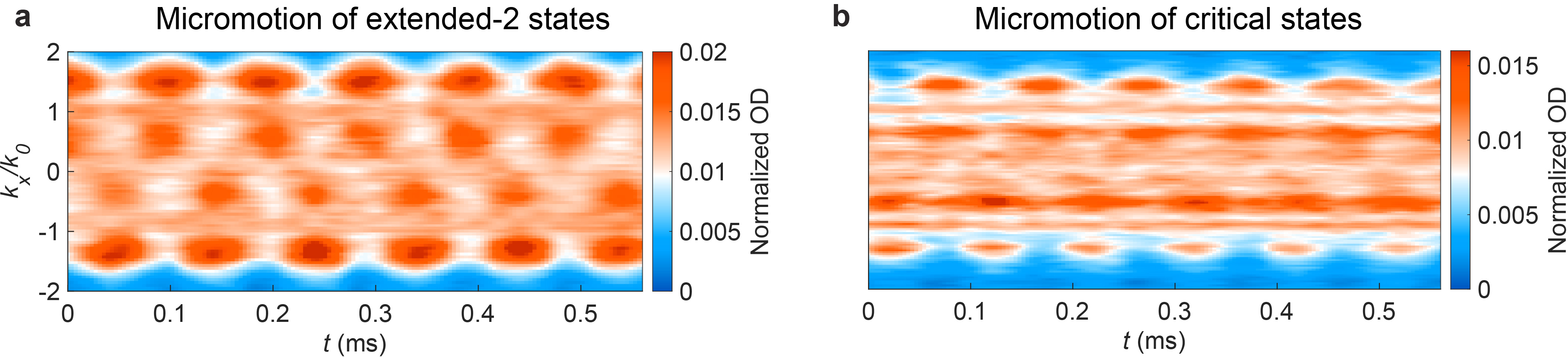}
  \caption{Experimental results for the $E_2$ state (\textbf{a}) and the critical state (\textbf{b}). After preparing the atoms in these states, we hold the Hamiltonian and measure the momentum distribution as a function of time. In both cases, the momentum peaks alternate between the left and right sides, oscillating at the final driving frequency $f_f=\SI{10.2}{kHz}$. The measurements are taken with a time step of \SI{0.02}{ms}, and each data point represents the average of four images. For clarity, the images are further processed by smoothing.}
  \label{fig:micromotion}
\end{figure}
In a Floquet-driven system, the atomic states are characterized not only by their long-term stroboscopic evolution but also by a fast periodic evolution within each driving cycle, known as micromotion~\cite{Eckardt2017Colloquium,Bukov2015high-frequency,Goldman2014PRX,Sun_2023_micromotion}. While the effective Hamiltonian $\hat{H}_{\mathrm{eff}}$ captures the time-independent physics when the system is sampled at integer multiples of the driving period $T = 1/f_f$, the true time-dependent state $|\psi(t)\rangle$ continues to evolve according to the full Hamiltonian $\hat{H}(t)$. This micromotion reflects the periodic population exchange and phase evolution between the coupled $s$ and $p$ bands, manifesting as high-frequency oscillations of observable quantities at the modulation frequency.

To experimentally resolve these fast dynamics, we perform TOF measurements with high temporal resolution after the adiabatic preparation sequence. Once the system reaches the target state, we hold the Hamiltonian parameters constant and record the momentum distribution at various time delays within the driving cycle. As illustrated in Fig.~\ref{fig:micromotion}, the micromotion is clearly manifested as a periodic redistribution of the momentum density. Specifically, the momentum peaks are observed to alternate between the left and right sides of the Brillouin zone, oscillating precisely at the final driving frequency $f_f = \SI{10.2}{kHz}$. This "sloshing" motion in momentum space is a direct consequence of the time-dependent interference between the $s$ and $p$ orbital components, which possess different spatial parities.

The experimental results for both the $E_2$ state (an excited extended state) and the critical state are shown in Fig.~\ref{fig:micromotion}a and Fig.~\ref{fig:micromotion}b, respectively. Both states exhibit pronounced micromotion, which serves as a sensitive probe of their orbital composition. The observation of these oscillations confirms that these states are not merely stationary states of the primary lattice, but are instead dynamic dressed states emerging from the strong, resonant $s$–$p$ coupling induced by the shaken quasiperiodic potential. Each data point in these plots represents an average of four independent images taken with a fine time step of \SI{0.02}{ms} (approximately $1/5$ of a driving period), ensuring that the fast oscillations are accurately captured without aliasing. The high contrast of the micromotion in the critical regime further underscores the significant role of interband hybridization in the formation of the tripartite quantum phase.

\section{Expansion Dynamics}

We experimentally observe that the density distribution after a finite expansion time exhibits a bimodal profile: a fraction of atoms expands rapidly while the remainder stays localized. A likely origin is incomplete adiabatic transfer during the state-preparation sequence, leaving a portion of atoms in the initial $s$-band localized state rather than being excited into the $p$-band coupled regime. The relative fraction of atoms successfully prepared into the target state depends on the amplitude of the first shaking stage $A_i$, as shown in Fig.~4d of the main text.

The results obtained by varying different preparation parameters are summarized in Fig.~\ref{fig:Exp}. The data used for Figs.~4e,f in the main text are extracted from these measurements. As shown in Fig.~\ref{fig:Exp}a, increasing $A_i$ at fixed $f_i$ drives the final state from localized to critical. In the localized regime, atoms show negligible expansion; in the critical regime, the expansion radius grows as $t^{0.5}$. In Fig.~\ref{fig:Exp}b, we fix a relatively large $A_i$ and vary $f_i$, driving the system from the extended to the critical regime. The dynamical exponent extracted from the time dependence changes continuously from $1$ to $0.5$, consistent with theoretical predictions for the three distinct transport phases~\cite{Wang2022PRB,Saha2019PRB}.

\begin{figure}[htbp]
\centering
\includegraphics[width=0.8\textwidth]{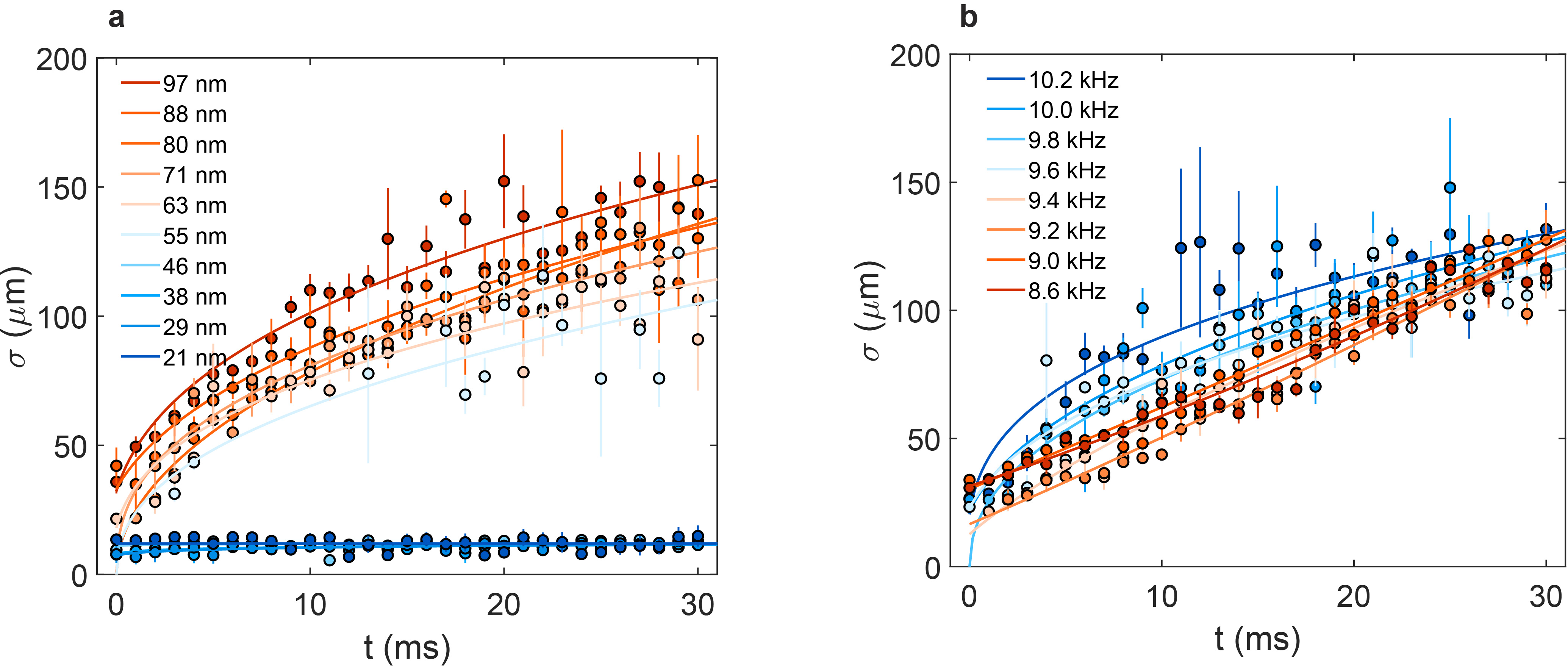}
\caption{Expansion dynamics under different preparation parameters. \textbf{a}, Measured radius as a function of expansion time for varying $A_i$ at fixed $f_i = \SI{10}{kHz}$. For small $A_i$ the atoms remain localized with negligible expansion, whereas larger $A_i$ drives the system into the critical regime with diffusive spreading. \textbf{b}, Measured radius for varying $f_i$ at fixed $A_i = \SI{75}{nm}$. As $f_i$ increases, the expansion crosses over from approximately linear to square-root scaling in time. Dots are experimental data; solid lines are power-law fits.}
\label{fig:Exp}
\end{figure}

For the expansion measurements we use a characteristic timescale of \SI{30}{ms}, within which clear broadening of the atomic cloud is already visible. The relatively large $p$-band hopping amplitude ($J_p \sim \SI{0.5}{kHz}$) ensures that the expansion is significant on this timescale. At longer times, drive-induced heating and decoherence thermalize the system, causing the diffusion dynamics of all states to converge, as shown in Fig.~\ref{fig:x2moment}b.

We also monitor the total atom number throughout the evolution, as shown in Fig.~\ref{fig:x2moment}a. All three states exhibit lifetimes exceeding \SI{200}{ms}, confirming that atom loss is negligible within the observation window.

To characterize the diffusion quantitatively, we record the in-situ normalized density distribution $n(x,t)$ after holding the Hamiltonian for time $t$. Density profiles are normalized such that $\int n(x,t)\,\mathrm{d}x = 1$, and pixels with optical density below zero are discarded to suppress background noise. From the double-Gaussian fits described in the Methods, we compute the second spatial moment
\begin{equation}
\langle x^2 \rangle = \int_{x_\mathrm{min}}^{x_\mathrm{max}} n_\mathrm{fit}(x,t)\,(x - \bar{x})^2\,\mathrm{d}x,
\end{equation}
where $\bar{x}$ is the mean position and the integration window is $\pm\SI{200}{\mu m}$ around the cloud center, chosen to include the full expanding distribution.

During the first \SI{30}{ms}, the extended, critical, and localized states display clearly distinct diffusion behaviors. Beyond \SI{30}{ms} the dynamics of all three states gradually converge, primarily due to drive-induced heating, background collisions, and partial decay of atoms back into the $s$-band. Accordingly, power-law fits to the expansion data are restricted to the first \SI{30}{ms}, where the measured dynamics faithfully reflect the transport properties of the prepared state.

\begin{figure}[htbp]
\centering
\includegraphics[width=0.8\textwidth]{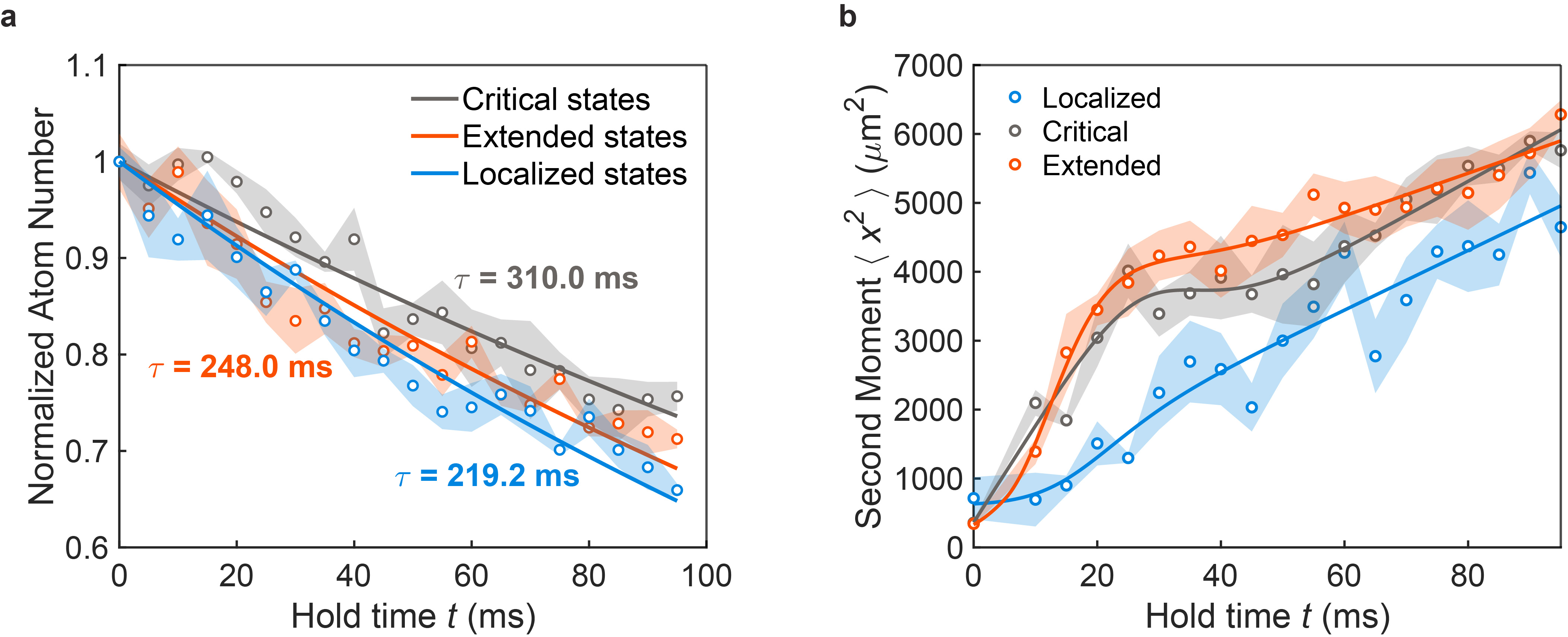}
\caption{Long-time atom number and diffusion dynamics for atoms prepared in different final states.
\textbf{a}, Normalized total atom number as a function of time for the extended, critical, and localized states, prepared with parameters $(A_i,\,f_i) = (\SI{63}{nm},\,\SI{9}{kHz})$, $(\SI{63}{nm},\,\SI{10}{kHz})$, and $(\SI{21}{nm},\,\SI{10}{kHz})$, respectively. All three states retain more than $1/e$ of their initial atom number beyond \SI{100}{ms}. Solid lines are exponential decay fits.
\textbf{b}, Long-time evolution of the second spatial moment $\langle x^2\rangle$, calculated from normalized in-situ density profiles within a symmetric window of $\pm\SI{200}{\mu m}$. The three states show distinct diffusion behaviors during the first \SI{30}{ms}, after which the curves converge due to classical thermalization. Solid lines are guides to the eye.}
\label{fig:x2moment}
\end{figure}

% \normalem
% \bibliographystyle{naturemag}
% \nolinenumbers
% \bibliography{ref}

%%%%%%%%%%%%%%%%%%%%%%%%%%%%%%%%%%%%%

\end{document}